\relax
%File: formatting-instruction.tex
\documentclass[letterpaper]{article} %DO NOT CHANGE THIS
\usepackage{aaai18}  %Required
\usepackage{times}  %Required
\usepackage{helvet}  %Required
\usepackage{courier}  %Required
\usepackage{url}  %Required
\usepackage{graphicx}  %Required
\frenchspacing  %Required
\setlength{\pdfpagewidth}{8.5in}  %Required
\setlength{\pdfpageheight}{11in}  %Required
%PDF Info Is Required:
  \pdfinfo{
/Title (2018 Formatting Instructions for Authors Using LaTeX)
/Author (AAAI Press Staff)}
\setcounter{secnumdepth}{2}  

% Use the postscript times font!
\usepackage{times}
\usepackage{graphicx}
\usepackage{subfigure}
\usepackage[skip=7pt,belowskip=0pt]{caption}
\usepackage{url}

\usepackage{pdfpages}

% the following package is optional:
\usepackage{latexsym} 

\usepackage{amsmath,amssymb,amsthm} 
\usepackage[ruled,vlined,linesnumbered]{algorithm2e}
\usepackage{etoolbox}
\usepackage{multicol}
%\patchcmd{\quote}{\rightmargin}{\leftmargin 1.5em \rightmargin}{}{}
\setlength{\textfloatsep}{7pt}
\setlength{\floatsep}{7pt}

\newtheorem{theorem}{Theorem}
\newtheorem{proposition}{Proposition}
\newtheorem{definition}{Definition}

\def\@coverletter{no university}
\newcommand{\coverletter}[1]{
  \def\@coverletter{#1}
}

%\renewcommand{\maketitle}{
%  {\LARGE{\bf \@title}}\\
%  {\small{\@coverletter}}\\
%}

% Following comment is from ijcai97-submit.tex:
% The preparation of these files was supported by Schlumberger Palo Alto
% Research, AT\&T Bell Laboratories, and Morgan Kaufmann Publishers.
% Shirley Jowell, of Morgan Kaufmann Publishers, and Peter F.
% Patel-Schneider, of AT\&T Bell Laboratories collaborated on their
% preparation.

% These instructions can be modified and used in other conferences as long
% as credit to the authors and supporting agencies is retained, this notice
% is not changed, and further modification or reuse is not restricted.
% Neither Shirley Jowell nor Peter F. Patel-Schneider can be listed as
% contacts for providing assistance without their prior permission.

% To use for other conferences, change references to files and the
% conference appropriate and use other authors, contacts, publishers, and
% organizations.
% Also change the deadline and address for returning papers and the length and
% page charge instructions.
% Put where the files are available in the appropriate places.

\title{Practical Scalability for Stackelberg Security Games}
\author{Arunesh Sinha$^\ast$, Aaron Schlenker$^\dagger$, Donnabell Dmello$^\dagger$, Milind Tambe$^\dagger$\\
$^\ast$University of Michigan, $^\dagger$University of Southern California\\
arunesh@umich.edu, \{aschlenk,ddmello,tambe\}@usc.edu}

\begin{document}

\maketitle

\begin{abstract}
  Stackelberg Security Games (SSGs) have been adopted widely for modeling adversarial interactions. With increasing size of the applications of SSGs, scalability of equilibrium computation is an important research problem. While prior research has made progress with regards to scalability, many real world problems cannot be solved satisfactorily yet as per current requirements; these include the deployed federal air marshals (FAMS) application and the threat screening (TSG) problem at airports. Further, these problem domains are inherently limited by NP hardness shown in prior literature. We initiate a principled study of approximations in zero-sum SSGs. Our contribution includes the following: (1) a unified model of SSGs called adversarial randomized allocation (ARA) games that allows studying most SSGs in one model, (2) hardness of approximation results for zero-sum ARA, as well as for the sub-problem of allocating federal air marshal (FAMS) and threat screening problem (TSG) at airports, (3) an approximation framework for zero-sum ARA with provable approximation guarantees and (4) experiments demonstrating the significant scalability of up to 1000x improvement in runtime with an acceptable 5\% solution quality loss.
\end{abstract}
\section{Introduction}
The Stackelberg Security Game (SSG) model (and variants) has been widely adopted in literature and in practice to model the defender-adversary interaction in various domains~\cite{Tambe2011,jiang2013defender,blum2014learning,guo2016coalitional,blocki2013,basilico2009leader,MunozdeCote}. Over time SSGs have been used to model increasingly large and complex real world problems, hence an important research direction in SSGs is the study of scalable Strong Stackelberg Equilibrium (SSE) computation algorithms, both theoretically and empirically. The scalability challenge has led to the development of a number of novel algorithmic techniques that compute the SSE of SSGs (see related work).  

However, scalability continues to remain a pertinent challenge across many SSG applications. 
%Current algorithmic techniques in SSE computation can broadly be classified as column generation based techniques~\cite{} and marginal representation based techniques~\cite{}. In literature, the marginal based approach has been shown to scale better than the column generation based approach~\cite{}. 
There are real world problems that even the best known approaches fail to scale up to, such as threat screening games (TSGs) and the Federal Air Marshals (FAMS) domain. The TSG model is used to allocate screening resources to screenees at airports and solves the problem for every hour (24 times a day). Yet, recent state-of-the-art approach for airport threat screening~\cite{TSG2016} (TSG) scales only up to 110 flights per hour whereas 220 flights  can depart per hour from the Atlanta Airport~\cite{AirportCap}. The FAMS problem is to allocate federal air marshals to US based flights in order to protect against hijacking attacks. Again, the best optimal solver for FAMS in literature~\cite{jain2010security} solves problems up to 200 flights (FAMS is an already deployed application), whereas on average 3500 international flights depart from USA daily~\cite{USDOT}. Further, these problems are fundamentally limited by the hardness of computing the exact solution.

To overcome the computational hardness, and also provide practical scalability we investigate approximation techniques for zero-sum SSGs. Towards that end, our \emph{first contribution} in this paper is a \emph{unified} model of SSGs that we name \emph{adversarial randomized allocation} (ARA) games. ARA captures a large class of SSGs which we call linearizable SSGs (defined later) and it includes TSGs and FAMS. Further, the ARA model provides a unified characterization of implementability of marginal strategies that has been studied in separate papers in literature. 

Our \emph{second contribution} is a set of hardness of approximation results for the class of zero-sum ARA problems, and also for sub-classes such as TSGs. For ARA, we show that the ARA equilibrium computation problem and the defender best response problem in the given ARA game have the same hardness of approximation property and in the worst case ARA is not approximable. Further, we show that even the restricted set of ARA problems given by FAMS and TSGs are both hard to approximate to any logarithmic factor.

Our \emph{third contribution} is a general approximation framework for finding the SSE for zero-sum ARAs. As an application, we demonstrate the use of this framework for FAMS and TSGs. The approximation approach combines techniques from dependent sampling~\cite{tsai2010urban} with domain specific heuristics to yield simple to implement approximation algorithms. We provide theoretical approximation bounds for both FAMS and TSGs.

Finally, as our \emph{fourth contribution}, we demonstrate via experiments that we can solve FAMS problem up to 3500 flights and TSG problems up to 280 flights with runtime improvements up to 1000x. Moreover, the loss for FAMS problems is less than 5\% with increasing flights and for TSGs is less than 1.5\% across all cases. Hence, our results provide a practical and simple framework for approximating zero-sum SSGs (more generally ARAs) with provable approximation guarantees. Moreover, our approach enables solving the real world FAMS and airport screening problem satisfactorily for a US wide deployment. All full proofs are in the Appendix.

%The marginal representation approach first computes a marginal solution which may be unimplementable as a mixed strategy (see details in Section~\ref{}). The main computational bottleneck in a marginal based approach is the steps involved in ensuring an implementable solution. Clearly, as TSG and NSG are NP-hard problem, scalability in trying to obtain an exact solution is fundamentally limited by the complexity of the problem. Thus, a natural approach for scalability is to approximate. 

% In this paper, we present an approximation approach for zero-sum security games. As part of this novel approach for security games, we provide the following contributions: (1) a general outline of adopting randomized rounding~\cite{} to sample from a (possibly unimplementable) marginal yet ensuring implementability of the final solution, (2) specialization of the general approach to specific domains of TSG and NSG, (3) the TSG specialization also provides a new approach in the basket of  randomized rounding techniques for cases that have both equality and inequality constraints and, (4) extensive experiments showing the superior scalability of our approach and the slight loss in solution quality on average for both TSGs and NSGs. 

\section{Related Work}
Two major approaches to scale up in SSGs include incremental strategy generation (ISG) and use of marginals. ISG uses a master slave decomposition, with the slave providing a defender or attacker best response~\cite{jain2010security}. All these approaches are fundamentally limited by the computational complexity of finding an exact solution~\cite{KorzhykCP10,Xu16a} and thus in some cases ISG has included approximation of defender/attacker best responses~\cite{guo2016coalitional,gan2015security}. Use of marginals and directly sampling from marginals while faster suffers from the issue of non-implementable (invalid) marginal solutions~\cite{kiekintveld2009computing,tsai2010urban}. Fixing the non-implementability again runs into complexity barriers~\cite{TSG2016}. Combinations of marginals and ISG approaches~\cite{Bosansky2015} and gradient descent based approach~\cite{amingradient} have also been used. Our study stands in contrast to these approaches as we aim to approximate the SSE and not compute it exactly, providing a viable alternative to ISG and bypassing the non-implementability of marginals approach. Another line of work uses regret minimization and endgame solving techniques~\cite{Moravkeaam6960,brown2017safe} for approximately solving large scale sequential zero sum games. Our game does not have a sequential structure and the large action space ($10^{33}$ for TSG and $10^{14}$ for FAMS; see appendix for the calculation) precludes using a standard no-regret learning approach.

%SSGs has a vast amount of literature. A lot of work has used column generation to compute SSE~\cite{jain2010security} and also used approximation in the slave of the CG (defender best response)~\cite{} or also approximation for attacker best response when considering complex adversary models~\cite{jain2013security}.  All these approaches are fundamentally limited by the computational complexity of finding an exact solution~\cite{KorzhykCP10,Xu16a}. Another approach is marginal representation of defender strategies, which while faster suffers from the issue of non-implementable (invalid) marginal solutions~\cite{kiekintveld2009computing}. Fixing the non-implementability again runs into complexity barriers. Combinations of these approaches~\cite{Bosansky2015} and gradient descent based approach~\cite{amingradient} have also been used. In contrast to all of these work, we study approximations, its properties and its hardness in this paper. Approximations have been studied for complex security games~\cite{guo2016coalitional,gan2015security} within the CG framework, while our study aims to approximate the SSE directly and provides a viable alternative to CG. 

Our approximation approach is inspired by randomized rounding (RR)~\cite{raghavan1987randomized}. However, distinct from standard RR, our problem has equality constraints. Previous work on RR with equality constraints address \emph{only} equality constraints~\cite{gandhi2006dependent} or works on obtaining an integral solution given an approximate fractional solution within a polyhedron with integral vertices~\cite{ageev2004pipage,chekuri2009dependent}. However, our initial fractional solution may not lie in an integral polyhedron, and we have both equality and inequality constraints. Thus, we provide an approach that exploits the disjoint structure of equality constraints in TSGs in order to use previous work on comb sampling~\cite{tsai2010urban} and then alters the output~\cite{bansal2012solving} to handle both equality and inequality constraints.

While there exists other hardness of approximation results for other Stackelberg security game like models~\cite{guo2016coalitional} that rely on the graph characteristics of those domains, our hardness of approximation results are the first such results for the simple FAMS and the recent TSG security games problem. 
% Our results also use specific aspects of the security games problem setting it aside from various hardness of approximation results in game theory.

\section{Model and Notation}
We present a general abstract model of \emph{adversarial randomized allocation} (ARA). ARA captures all \emph{linearizable} SSGs, which is defined as those in which the probability $c_t$ of defending a target $t$ is linear in the defender mixed strategy. The ARA game model is a Stackelberg game model in which the defender moves first by committing to a randomized allocation and the adversary best responds. We start by presenting the defender's action space. There are $k$ defense assets that need to be allocated to $n$ objects to be defended. In this model, assets and objects are abstract entities and do not represent actual resources and targets in a security games. We will instantiate this abstract model with concrete examples of FAMS and TSG in the following sub-sections. 

\begin{figure}[t] 
  \centering
    \includegraphics[width=\linewidth]{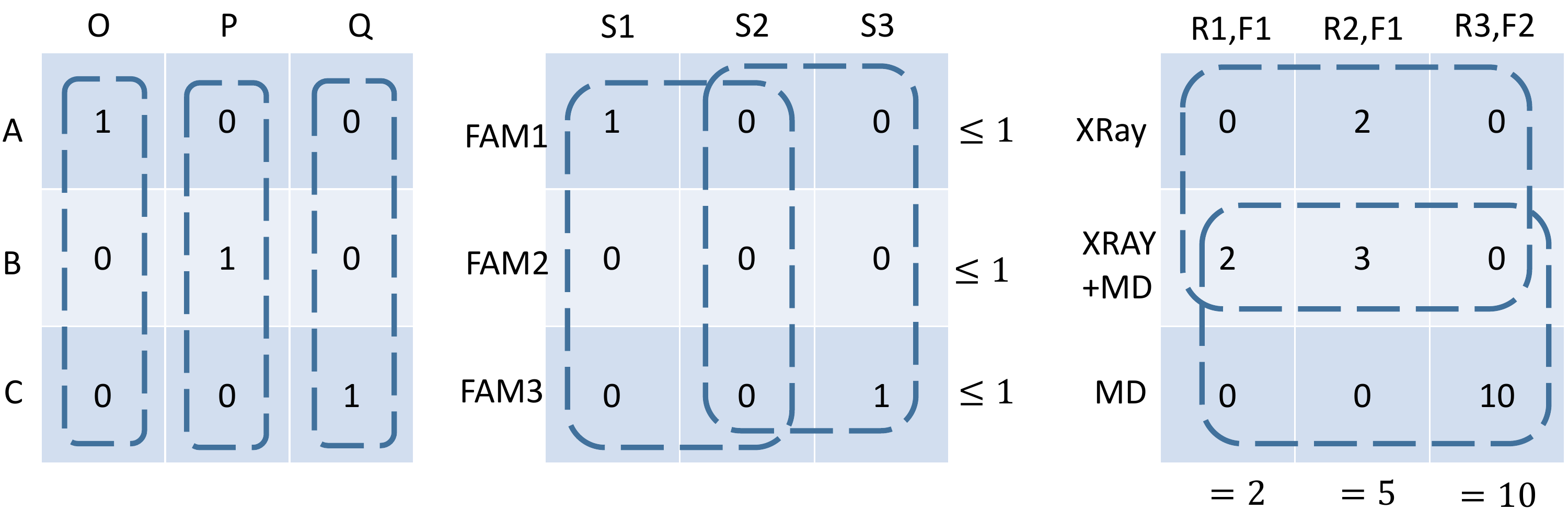}
    \caption{\small Three illustrations: (a) ARA with assets A,B,C and objects O,P,Q with 3 example assignment constraints (shown as dashed lines) with upper bound 1 on the columns. Shown also is an assignment that satisfies these constraints. (b) FAMS problem with 2 flights, 3 schedules and 3 FAMS. S1 and S2 share one flight and so do S2 and S3. The two assignment constraints (for the two flights) with upper bound 1 are represented by the two dashed lines.  Additional constraints are present on each row, shown on the right of the matrix. The attacker chooses a flight to attack, hence the dashed lines also show the index set $T$ of targets. A sample pure strategy fills the column entries. (c) TSG with the two assignment constraints (resource capacity) with upper bound 7 for XRay and 15 for Metal Detector (MD) represented by the two dashed lines. Additional equality constraints denoting the number of passengers in each passenger category (R,F) are present on each column, shown on the bottom of the matrix. A passenger category (column) is made from risk and flight. An adversary of type $R1$ can only choose the first column $R1,F1$ and $R2$ can choose from the other two columns. Thus, the index set $T$ for targets corresponds to columns. A sample pure strategy fills the column entries.}\label{Illus}
\end{figure}

\textbf{Defender's randomized allocation of resources}: The allocation can be represented as a $k \times n$ matrix with the $(i,j)^{th}$ entry $x_{i,j}$ denoting the allocation of asset $i$ to object $j$, and each $x_{i,j} \geq 0$. There is a set of \emph{assignment constraints} on the entries of the matrix. Each assignment constraint is characterized by a set $S \subseteq \{1, \ldots, k\}\times \{1, \ldots, n\}$ of indexes of the matrix and the constraint is given by $n_s \leq \sum_{(i,j) \in S}x_{i,j} \leq N_S$, where $n_s, N_S$ are non-negative integers. We will refer to each assignment constraint as $S$. Also for sake of brevity, we denote the vector of all the entries in the matrix as $\mathbf{x}$ and $\sum_{(i,j) \in S}x_{i,j}$ as $\mathbf{x}[S]$. 

Pure strategies of the defender are \emph{integral} allocations that respect the assignment constraints, i.e., \emph{integral} $\mathbf{x}$'s such that $n_S \leq \mathbf{x}[S] \leq N_S$ for all assignment constraints $S$. See Figure~\ref{Illus} for an illustrative example of the assignment constraints and a valid pure strategy.  Let the set of pure strategies be $P$ and we will refer to a single pure strategy as $\mathbf{P}$. On the other hand, the space of marginal strategies $MgS$ are those $\mathbf{x}$'s that satisfy the assignment constraints $n_S \leq \mathbf{x}[S] \leq N_S$ for all $S$; note that marginal strategies need not be integral. 

Mixed strategies are probability distributions over pure strategies, e.g., probabilities $a_1, \ldots, a_m$ ($\sum_m a_m = 1$) over pure strategies $\mathbf{P}_1, \ldots, \mathbf{P}_m$. An expected (marginal) representation of a mixed strategy is $\mathbf{x} = \sum_m a_m \mathbf{P}_m$. Thus, the space of mixed strategies is exactly the convex hull of $P$, denoted as $conv(P)$. 
Typically, the space of marginal strategies is larger than $conv(P)$, i.e., $conv(P) \subset MgS$, hence all marginal strategies are not implementable as mixed strategies. The conditions under which all marginal strategies are implementable (or not) has an easy interpretation in our model (see the implementability results later in this section).

\textbf{Adversary's action}: The presence of an adversary sets our model (and SSGs) apart from a randomized allocation problem~\cite{budish2013designing} and makes ARA a game problem. The attacker's action is to choose a target to attack. In our abstract formulation a target $t$ is given by a set $T \subset  \{1, \ldots, k\}\times \{1, \ldots, n\}$ of indexes of the allocation matrix. In order to capture linearizable SSGs, the probability of successfully defending an attack on target $t$ is $c_t = \sum_{i,j \in T} w_{i,j} x_{i,j}$ where $w_{i,j}$'s are given constants such that $w_{i,j} \leq 1/\max_{\mathbf{x} \in conv(P)} \sum_{i,j \in T} x_{i,j}$. The constraint on $w_{i,j}$ ensures that $c_t \leq 1$. We assume the total number of targets in polynomial in the size of the allocation matrix. Then, as is standard for SSGs, the defender utility for defender strategy $\mathbf{x}$ and attacker strategy $t$ is given by the expected value
$$ 
U_d(\mathbf{x}, t) = c_t U^t_s + (1- c_t) U^t_u
$$
where $U^t_s$ (resp. $U^t_u$) is the defender's utility when target $t$ is successfully (resp. unsuccessfully) defended. As we restrict ourselves to zero-sum games, the attacker's utility is negation of the above\footnote{We remark that modeling-wise the extension to general-sum case, non-linearity in probabilities or exponentially many targets is straightforward; here we restrict the model as it suffices for the domains we consider. See online appendix for the extension.}. 

The problem of Strong Stackelberg equilibrium computation can be stated as:
$\max_{\mathbf{x},z, a_i}   z$ subject to $z \leq U_d(\mathbf{x}, t) \; \forall t$ and 
$\mathbf{x} = \sum_{i:\mathbf{P}_i \in P} a_i \mathbf{P}_i $, where the last constraint represents $\mathbf{x} \in conv(P)$.
Note that the inputs to the SSE problem are the assignment constraints, and the number of pure strategies can be exponential in this input. 
Thus, even though the above optimization is a LP, its size can be exponential in the input to the SSE computation problem. However, using the marginal strategies $MgS$ instead of the mixed strategies $conv(P)$ results in a polynomial sized $ marginalLP$:
$$
\begin{array}{r l}
 \max_{\mathbf{x},z,c_t} &  z \\
 \mbox{subject to} & z \leq U(\mathbf{x}, t) \quad \forall t  \\
& n_s \leq \mathbf{x}[S] \leq N_S \;\; \forall S \mbox{ and } x_{i,j} \geq 0 \;\; \forall i,j
\end{array}
$$
But, as stated earlier $conv(P) \subset MgS$, and hence the solution to the optimization above may not be implementable as a valid mixed strategy. In our approximation approach we will solve the above $marginalLP$ as the first step obtaining marginal solution $\mathbf{x}^m$. 

\textbf{Bayesian Extension}\footnote{Typically player types denotes different utilities but as Harsanyi~\shortcite{harsanyi1967games} originally formulated, types capture any incomplete information including, as for our case, the lack of information about the action space of adversary. The game is still zero-sum.}: We also consider the following simple extension where we consider types of adversary $\theta \in \Theta$ and each adversary type $\theta$ attacks a set of targets $\mathcal{T}_{\theta}$ such that $\mathcal{T}_{\theta} \cap \mathcal{T}_{\theta'} = \phi$ for all $\theta, \theta' \in \Theta$. The adversary is of type $\theta$ with probability $p_\theta$ ($\sum_\theta p_\theta = 1$). Then, the exact SSE optimization can be written as: $\max_{\mathbf{x},z_{\theta}, a_i}   p_{\theta}z_{\theta}$ subject to $z_{\theta} \leq U_d(\mathbf{x}, t) \; \forall \theta \; \forall t \in \mathcal{T}_{\theta}$ and 
$\mathbf{x} = \sum_{i:\mathbf{P}_i \in P} a_i \mathbf{P}_i $. A corresponding $marginalLP$ can be defined in exactly the same way as for original ARA. Next, we show how the FAMS and TSG domain are instances of this abstract ARA model and Bayesian ARA respectively.

\textbf{Implementability}: Viewing the defender's action space as a randomized allocation provides an easy way to characterize non-implementability of mixed strategies across a wide range of SSGs, in contrast to prior work that have identified non-implementability for specific cases~\cite{KorzhykCP10,letchford2013solving,TSG2016} . The details of this interpretation can be found in the Appendix.

\subsection{FAMS}
We model zero-sum FAMS in the ARA model. The FAMS problem is to allocate federal air marshal (FAMS) to flights to and from US in order to prevent hijacking attacks. The allocation is constrained by the number of FAMS available and the fact that each FAMS must be scheduled on round trips that take them back to their home airport.
Thus, the main technical complication arises from the presence of schedules. A schedule is a subset of flights that has to be defended together, e.g., flight f1 and f2 should be defended together as they form a round trip for the air marshal. Air marshals are allocated to schedules, no flight can have more than one air marshal and some schedules cannot be defended by some air marshals. The adversary attacks a flight.

 Then, we capture the FAMS domain in the above model by mapping schedules in FAMS to objects (on columns) and air marshal in FAMS to assets (on rows). See Figure~\ref{Illus} for an illustrative example. The assignment constraints include the constraint for each resource $i$: $\sum_j x_{i,j} \leq 1$, which states that every resource can be assigned at most once. If an air marshal $i$ cannot be assigned to schedule $j$ then add the constraint $ x_{i,j} = 0$. A target $t$ in the abstract model maps to a flight $f$ in FAMS, and the set $T$ are all the indexes for all schedules that include this flight: $\{(i,j) ~|~$ flight f is in schedule $j\}$. The constraint that a flight cannot have more than one air marshal is captured by adding the \emph{target allocation constraint} $\mathbf{x}[T] \leq 1$. The probability of defending a target (flight) is $c_t = \mathbf{x}[T]$, hence the weights $w_{i,j}$'s in ARA are all ones. 
 
 \subsection{TSG}
We model TSGs using the Bayesian formulation of ARA. The TSG problem is how to allocate screening resources to screenees in order to screen optimally, which we elaborate in the context of airline passenger screening.
In TSGs, different TSG resources such as X-Rays, Metal Detector act in teams to inspect an airline passenger. The possible teams are given. Passengers are further grouped into passenger categories with a given $N_c$ number of passengers in each category $c$. The allocation is of resource teams to passenger categories. There are \emph{resource capacity constraints} for each resource usage (not on teams but on each resource). Further, all passengers need to be screened. Each resource team $i$ has an effectiveness $E_i < 1$ of catching the adversary. Observe that, unlike SSGs, the allocation in TSGs is not just binary $\{0,1\}$ but any positive integer within the constraints. The passenger category $c$ is a tuple of risk level and flight $(r, f)$; the adversary's action is to choose the flight $f$ but he is probabilistically assigned his risk level.
 
 Then, we capture the TSG domain in the above abstract model by mapping passenger categories in TSGs to objects (on columns) and resource teams in TSGs to assets (on rows). See Figure~\ref{Illus} for an illustrative example. The capacity constraint for each resource $r$ is captured by specifying the constraint $\mathbf{x}[S] \leq N_S$ which contains all indexes of teams that are formed using the given resource $r$: $S = \{(i,j) ~|~$ team $i$ is formed using resource $r\}$ with $N_S$ equal to the resource capacity bound for resource $r$. For every passenger category $j$, the constraint $\sum_i x_{i,j} = N_j$ enforces that all passengers are screened. A target $t$ in TSG is simply a passenger category $j$, thus, the set $T$ is $\{(i,j)~|~ j$ is given passenger category$\}$. The probability of detecting an adversary in category $j$ is given by $\sum_{(i,j) \in T} E_i x_{i,j}/N_j$, hence the weights $w_{i,j}$ are $E_i/N_j$; since $E_i < 1$ it is easy to check that $\sum_{(i,j) \in T} w_{i,j} x_{i,j} \leq 1$ for any $T$. The adversary type is the risk level $r$, and each type $r$ of adversary can choose a flight $f$, thus, choosing a  target which is the passenger category $(r,f)$. The probability of the adversary having a particular risk level is given.

\section{Computation Complexity}
In this section, we explore the \emph{hardness of approximation} for ARAs, FAMS and TSGs. In prior work on computation complexity of SSGs, researchers~\cite{Xu16a} have focused on hardness of exact computation providing general results relating the hardness of defender best response (DBR) problem (defined below) to the hardness of exact SSE computation. In contrast, we relate the hardness of approximation of the DBR problem to hardness of approximation of ARAs. We also prove that special cases of ARA such as FAMS and TSGs are also hard to approximate.

First, we formally state the equilibrium computation problem in adversarial randomized allocation: given the assets, objects and assignment constraints of an adversarial randomized allocation problem as input, output the SSE utility and a set of pure strategies $P_1, \ldots, P_m$ and probabilities $p_1, \ldots, p_m$ that represents the SSE mixed strategy.
We restrict $m$ to be polynomial in the input size. This is natural, since a polynomial time algorithm cannot produce an exponential size output. Also, as discussed in prior literature~\cite{Xu16a} the size of the support set of any mixed strategy is one more than the dimension of any pure strategy, which is the poly $kn+1$ in our case.

Next, as has been defined in prior literature~\cite{Xu16a}, we state the defender best response (DBR) problem which will help in understanding the results. The DBR problem can be interpreted as the defender's best response to a given mixed strategy of the adversary. The DBR problem also shows up naturally as the slave problem in column generation based approaches to SSGs. While it is easy to show the hardness of approximation of given DBR problems, the question of how its relates to the hardness of approximation of SSE computation is open. 
%Recently, \cite{Xu16a} related the complexity (NP Hardness) of the exact SSE computation of a subset of SSGs and the corresponding DBR. Distinct from this result, we relate the approximation complexity of ARA and the corresponding DBR problem.
\begin{definition} The DBR problem is $\max_{\mathbf{x} \in P} \mathbf{d} \cdot \mathbf{x}$ where  $\mathbf{d}$ is a vector of positive constants. DBR is a combinatorial problem that takes the assignment constraints as inputs, and not $P$.
%\begin{itemize} 
%\item DBR is the problem $\max_{\mathbf{x} \in P} \mathbf{d} \cdot \mathbf{x}$ where  $\mathbf{d}$ is a vector of positive constants. DBR is a combinatorial problem.
%\item The continuous version of DBR is DBR-C: $\max_{\mathbf{x} \in conv(P)} \mathbf{d} \cdot \mathbf{x}$.
%\item The unweighted version of the DBR is DBR-U: $\max_{\mathbf{x} \in P}  \mathbf{1} \cdot \mathbf{x}$. 
%\end{itemize}
%All these problems take the relaxed constraints given by Eq.~\ref{marginalspace} (marginal space constraints) as inputs, and not $P$.
\end{definition}

Next, we state the standard definition of approximation
\begin{definition}
An algorithm for a maximization problem is $r$-approximate if it provides a feasible solution with value at least $OPT/r$, where $OPT$ is the exact maximum value.
\end{definition}
Note that lower $r$ means better approximation. Depending on the best $r$ possible, optimization problems are classified into various approximation complexity classes with increasing hardness of approximation in the following order PTAS, APX, log-APX, and poly-APX. We extensively use the well-known approximation preserving AP reduction between optimization problems for our results. 
%AP reduction is analogous to reductions used for NP hardness but must also account for mapping of approximation guarantees (and hardness of approximation). AP reduction is among the strongest of all approximation preserving reductions as it preserves membership in most of the known approximation complexity classes. 
We do not delve into the formal definition of complexity classes or AP reduction here due to lack of space and these concepts being standard~\cite{Ausiello}.
Our first result shows that the ARA's approximation complexity is same as that of the DBR problem and in the worst case cannot be approximated.

\begin{theorem} \label{ARAHard}
The following hardness of approximation results hold for ARA problems
\begin{itemize}
\item ARA problems cannot be approximated by any bounded factor in polynomial time, unless $P=NP$.
\item If the DBR problem for given ARA problem lies in some given approximation class (PTAS, APX, log-APX, poly-APX), then so does the ARA problem.
\end{itemize}
\end{theorem}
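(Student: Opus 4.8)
The plan is to prove the two bullets essentially independently: the first by a direct reduction from an NP-complete feasibility problem, the second by an approximation-preserving (AP) reduction from ARA to DBR built on LP duality / column generation.

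\textbf{Unbounded inapproximability.} I would reduce from a classic NP-complete problem such as \textsc{Exact Cover by 3-Sets}. Given an instance, I build an ARA game with a single target $t$ and defender payoffs normalized so that $U^t_u=0$, $U^t_s=1$; then $U_d(\mathbf{x},t)=c_t$ and the SSE value equals $\max_{\mathbf{x}\in conv(P)} c_t(\mathbf{x}) = \max_{\mathbf{P}\in P} c_t(\mathbf{P})$. The assignment constraints will (i) encode the cover instance via equality constraints $n_S=N_S=1$ on the ``element'' index sets, (ii) include a dummy index $x_{1,0}$ (with a relaxation term) so that setting $x_{1,0}=1$ is always a feasible pure strategy, guaranteeing $P\neq\emptyset$ and a well-defined SSE, and (iii) be rigged so that a distinguished $3$-set $C^\ast$ lies in every exact cover (append a fresh private element contained only in $C^\ast$), forcing $x_{1,C^\ast}$ to be $1$ in some pure strategy exactly when the instance is a yes-instance and to be $0$ in every pure strategy otherwise. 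Taking $T_t=\{(1,C^\ast)\}$ and $w_{1,C^\ast}=1$, the SSE value is $1$ on yes-instances and $0$ on no-instances. Any $r$-approximation for finite $r$ must return a feasible mixed strategy of value $\ge OPT/r$; since feasible values here are nonnegative and $\le OPT$, that value is positive iff the instance is a yes-instance, so the $r$-approximation would decide the NP-complete problem in polynomial time, forcing $P=NP$.

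\textbf{From DBR to ARA.} The exact SSE is the value of the LP $\max\{z: z\le U_d(\mathbf{x},t)\ \forall t,\ \mathbf{x}\in conv(P)\}$, which has exponentially many variables (one per pure strategy) but is a linear optimization over the polytope $conv(P)$ subject to polynomially many constraints (one per target). The separation / pricing problem, given a candidate attacker mixed strategy $q$ over targets (the dual multipliers), is $\max_{\mathbf{P}\in P}\sum_t q_t U_d(\mathbf{P},t)$; since $U_d(\mathbf{P},t)=U^t_u+(U^t_s-U^t_u)\sum_{(i,j)\in T_t}w_{i,j}P_{i,j}$ is affine in $\mathbf{P}$ with \emph{nonnegative} linear part (using $U^t_s\ge U^t_u$ and $w_{i,j}\ge 0$), this equals $\max_{\mathbf{P}\in P}\mathbf{d}(q)\cdot\mathbf{P}$ plus a constant for a nonnegative vector $\mathbf{d}(q)$, i.e.\ a DBR instance. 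I then invoke the standard fact (Gr\"otschel--Lov\'asz--Schrijver; approximate column generation) that an $r$-approximate linear-optimization oracle over $conv(P)$ yields an $r$-approximate solution of the whole LP in polynomial time, and that a Carath\'eodory / Frank--Wolfe-style extraction returns a mixed strategy supported on $O(kn)$ pure strategies, matching the support bound already noted. Applied uniformly across classes this gives DBR in PTAS $\Rightarrow$ ARA in PTAS, DBR in APX $\Rightarrow$ ARA in APX, and likewise for log-APX and poly-APX, which is exactly the claimed inclusion (an AP reduction $\text{ARA}\le_{AP}\text{DBR}$).

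\textbf{Main obstacles.} For the first bullet, the delicate point is gadget (iii): the escape needed for $P\neq\emptyset$ must not accidentally make the target coverable on no-instances, and coverability of the single target must faithfully mirror existence of an exact cover. For the second bullet, the real work is the fiddly verification that the approximate-separation-implies-approximate-optimization machinery (a) preserves the approximation factor rather than inflating it, (b) runs in polynomial time with a polynomially bounded support, and (c) acts on a properly normalized objective so that $OPT>0$ and the ratio is meaningful; the sign condition $U^t_s\ge U^t_u$ is precisely what turns the pricing problem into a genuine DBR instance rather than an arbitrary, possibly sign-indefinite, linear program.
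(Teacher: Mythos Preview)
Your approach and the paper's both work but via different mechanisms. The paper does not build a $0/1$-gap instance; instead it takes any NP-hard \emph{unweighted} DBR instance, adds the single equality $\mathbf{1}\cdot\mathbf{x}=k$ as an extra assignment constraint, and uses a single target. Then the very \emph{feasibility} of the resulting ARA (does any pure strategy exist?) decides whether the DBR instance has value $k$; binary search on $k$ finishes. Since even producing a feasible output is NP-hard, no bounded-ratio approximation can exist. Your X3C gadget with a forced $0$-versus-$1$ optimum is an equally valid route, and arguably cleaner because it keeps feasibility easy; the delicate point you flag (the dummy must not let the target be covered on no-instances) is exactly the place your construction would need to be written out carefully.

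\textbf{Second bullet.} Here you and the paper go in \emph{opposite directions}. The statement, read literally, says ``DBR easy $\Rightarrow$ ARA easy'' (i.e.\ $\text{ARA}\le_{AP}\text{DBR}$), and that is what you attempt via approximate pricing/column generation. The paper's proof, however, shows the converse $\text{DBR}\le_{AP}\text{ARA}$: given a DBR instance $\max_{\mathbf{x}\in P}\mathbf{d}\cdot\mathbf{x}$, it builds a one-target ARA with $T$ equal to the whole matrix and $w\propto d$, so the ARA objective is $\max_{\mathbf{x}\in conv(P)}\mathbf{w}\cdot\mathbf{x}$; any $r$-approximate ARA solution $\mathbf{x}^{\epsilon}=\sum_i a_i\mathbf{P}_i$ must have some pure strategy $\mathbf{P}_j$ in its (polynomial-size) support with $\mathbf{w}\cdot\mathbf{P}_j\ge\mathbf{w}\cdot\mathbf{x}^{\epsilon}\ge OPT/r$, yielding an $r$-approximation for DBR. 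That argument is short and elementary, but it establishes ``ARA easy $\Rightarrow$ DBR easy,'' which is the reverse implication of the stated bullet. So either the paper's statement or its proof is the intended one; your proposal matches the statement, the paper's own argument matches the converse.

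\textbf{A gap in your route.} The step you call a ``standard fact'' --- that an $r$-approximate linear-optimization (pricing) oracle over $conv(P)$ yields an $r$-approximate solution of the full LP --- is not standard in the Gr\"otschel--Lov\'asz--Schrijver sense: GLS gives polynomial equivalence of exact separation and exact optimization, and approximate oracles do not in general preserve the ratio through the ellipsoid method. For this specific zero-sum structure one can recover such a statement (e.g.\ run multiplicative-weights on the attacker side against the $r$-approximate defender oracle and argue about the resulting average), but that is a real proof you would have to supply, including handling the additive constants $U^t_u$ so the objective is nonnegative and the ratio is meaningful. Without that, your second-bullet argument has a genuine hole at exactly the point you flagged as ``the real work.''
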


\begin{proof}[Proof Sketch] The first result works by constructing a ARA from a NP hard unweighted ($\mathbf{d}=1$) DBR problem such that the feasibility of the constructed ARA solves the DBR problem, thereby ruling out any approximation. Such unweighted DBR problems exist (for example for FAMS). The second part of the proof works by constructing an ARA problem with one target and showing that the solution yields an approximate value for a relaxed DBR with $\mathbf{x} \in conv(P)$. Moreover, this solution is an expectation over integral points (pure strategies), thus, at least one integral point in the support set output by ARA also provides an approximation for the corresponding combinatorial DBR.
\end{proof}

As the above complexity result is a worst case analysis, one may wonder whether the above result holds for sub-classes of ARA problems. We show that strong versions of inapproximatibility also holds for FAMS and TSGs.

\begin{theorem} \label{TSGInapp}
TSGs cannot be approximated to any logarithmic factor in poly time, unless P=NP.
\end{theorem}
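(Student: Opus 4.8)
The plan is a gap-preserving reduction from \textsc{Set Cover}: for every $\epsilon>0$ it is NP-hard (tight under $P\ne NP$) to decide, given a universe $U$, a set family $\mathcal{S}$ and an integer $k$, whether $U$ has a cover of size $\le k$ or every cover has size $>(1-\epsilon)k\ln|U|$. From such an instance I would build a single-type TSG of polynomial size $N$ whose defender SSE value is a ``large'' number $V_{\mathrm{hi}}$ when a small cover exists and a ``small'' positive number $V_{\mathrm{lo}}$ otherwise, with $V_{\mathrm{hi}}/V_{\mathrm{lo}}=\Omega(\log N)$; a polynomial-time algorithm approximating TSGs within any logarithmic factor would then decide the promise problem, forcing $P=NP$.

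The gadget would use one passenger category (column) per element of $U$ with a common passenger count $N_j$, so the adversary's target is exactly a ``flight'' $=$ element (one risk type suffices, so the Bayesian layer is inert). For each $S_j\in\mathcal S$ I would add a resource, and the induced single-resource team, compatible only with the categories $f\in S_j$ (incompatibility encoded by forcing the corresponding $x_{i,j}=0$); a global ``budget'' resource shared by all these teams, with capacity tied to $k$, limits the available set-team screening, and a low-effectiveness ``default'' team compatible with every category keeps the constraints $\sum_i x_{i,j}=N_j$ always satisfiable. Choosing the effectivenesses $E_i<1$ so that a flight attains detection probability near $E$ exactly when it is screened mostly by compatible set-teams, and near the small default value otherwise, and then exploiting that adding a common constant to all $U^t_s,U^t_u$ merely shifts the game value, I would rescale so that $V_{\mathrm{lo}}$ is a small positive number and $V_{\mathrm{hi}}$ exceeds it by the intended logarithmic factor. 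Essentially the same construction, with schedules instead of teams, a marshal-budget resource, and the all-ones weights, handles FAMS --- there the covering structure is even more direct, since placing a marshal on schedule $S_j$ raises $c_t$ for every $t\in S_j$ simultaneously.

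In the YES case the defender deterministically assigns every flight to a set-team drawn from the size-$\le k$ cover, so every target has detection probability $\approx E$ and the SSE value is $V_{\mathrm{hi}}$. In the NO case, since no small family of sets covers $U$, any allocation --- and, with more work, any \emph{mixture} of allocations --- leaves the adversary a flight whose detection probability is pushed toward the default value, giving SSE value $V_{\mathrm{lo}}$; the ratio is $\Omega(\log|U|)=\Omega(\log N)$ by the \textsc{Set Cover} gap.

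The main obstacle is exactly the parenthetical ``any mixture of allocations'': a pure-strategy argument only exhibits one poorly screened flight, whereas the defender optimizes over $conv(P)$, i.e.\ over distributions of allocations, and one must rule out a distribution raising $\min_t c_t$ to within an $O(1/\log|U|)$ fraction of the YES value. This forces the reduction to use \textsc{Set Cover} gap instances that also defeat fractional/randomized covering --- via instances whose LP cover value matches the integral one, or an averaging argument bounding $\sum_t \Pr[t\text{ well screened}]$ by the budget --- rather than arbitrary hard instances; the remainder is calibrating $E_i$, the $N_j$, the capacities and the utility shift so the instance stays a legal TSG ($E_i<1$, integral parameters, $w_{i,j}\le 1/\max_{\mathbf{x}\in conv(P)}\sum x$) while the $\Theta(\log N)$ gap in the SSE value survives. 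As a cross-check, for a single-target zero-sum TSG the SSE value equals the DBR optimum over $conv(P)$, which by linearity of $c_t$ equals the integral DBR optimum, so it would in fact suffice to prove TSG-DBR is log-APX-hard; phrasing the reduction as a game is simply what makes the logarithmic gap natural to produce.
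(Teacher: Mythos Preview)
Your route differs from the paper's and leaves the central obstacle you yourself flag unresolved. The paper reduces from \textsc{Maximum Independent Set} (poly-APX-hard, hence in particular log-inapproximable), not \textsc{Set Cover}, and crucially uses a \emph{single} passenger category. With one target the SSE objective collapses to a single linear function of $\mathbf{x}$, so the optimum over $conv(P)$ is attained at a vertex and equals the integral DBR optimum --- this is exactly the ``cross-check'' you note in your last paragraph, and the paper uses it as the main device, not an afterthought. Concretely: one team per vertex with $E_i=1$, one resource per edge $(i,k)$ with capacity $1$ (giving the TSG constraint $x_i+x_k\le 1$, so pure strategies are independent sets), and a dummy team with $E=0$ to soak up the equality $\sum_i x_i=N$. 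The objective becomes $\sum_{i\in V} x_i$; any $r$-approximate mixed strategy must, by linearity, place positive weight on some pure strategy that is itself an $r$-approximate independent set.

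Your \textsc{Set Cover} construction, by contrast, has one target per universe element, so the SSE value is a $\min$ over targets and the ``any mixture'' step is a real gap. The fixes you propose do not obviously close it: the standard $(1-\epsilon)\ln n$ hardness instances for \textsc{Set Cover} need not have large \emph{fractional} cover in the NO case (the LP relaxation can still have value $O(k)$ even when every integral cover has size $\Theta(k\ln n)$), so ``instances whose LP cover value matches the integral one'' with a logarithmic gap may simply not exist; and the averaging argument you sketch bounds $\sum_t c_t$, not $\min_t c_t$, which is what the adversary exploits. The paper's single-target/Independent-Set encoding sidesteps all of this and is also shorter; since you already spotted the single-target linearization, I would rebuild the argument around it.
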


\begin{theorem} \label{FAMSHard}
FAMS problems cannot be approximated to any logarithmic factor in poly time, unless P=NP.
\end{theorem}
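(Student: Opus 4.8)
The plan is to prove Theorem~\ref{FAMSHard} by an approximation-preserving reduction from Minimum Set Cover, which is known to be NP-hard to approximate within $(1-\epsilon)\ln n$ for every $\epsilon>0$; to get the literal ``any logarithmic factor'' conclusion I would instead start the same construction from a Label-Cover / Min-Rep style source, whose inapproximability survives up to $2^{\log^{1-\epsilon}n}$ and hence beats every logarithmic (indeed every polylogarithmic) factor. The first step is to fix the zero-sum utilities in the constructed FAMS instance as $U^t_s=1$ and $U^t_u=0$ for every flight $t$, so that $U_d(\mathbf{x},t)=c_t=\mathbf{x}[T_t]$. By the SSE optimization this makes the equilibrium value exactly the max--min coverage probability $\max_{\mathbf{x}\in conv(P)}\min_t \mathbf{x}[T_t]=\max_{D\in\Delta(P)}\min_t \Pr_{\mathbf{P}\sim D}[\text{flight }t\text{ covered}]$, so an $r$-approximation to FAMS is precisely an $r$-approximation to this combinatorial quantity and it suffices to produce a logarithmic gap in it.

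Given a Set Cover instance $(E,\mathcal S)$ and threshold $K$ (chosen, as the hardness reductions allow, so that the ``yes'' case admits an exact cover, i.e.\ a partition of $E$ into $K$ members of $\mathcal S$), I would build a FAMS instance with one flight per element of $E$ (plus a small set of auxiliary flights used only to control soundness), one schedule $\sigma_j$ per set $S_j$ carrying flight-set $S_j$ together with the relevant auxiliary flights, and $K$ air marshals with no forbidden pairs. Completeness is then immediate: in the yes case the defender assigns the $K$ marshals to the $K$ pairwise-disjoint schedules of the partition; this single pure strategy covers every flight exactly once, so $c_t=1$ for all $t$ and the SSE value is $1$.

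The heart of the argument is soundness: if every set cover of $(E,\mathcal S)$ has more than $cK\log|E|$ sets, I must show every defender mixed strategy has max--min coverage $O(1/\log|E|)$. By the minimax theorem for this finite zero-sum game the value equals $\min_{q\in\Delta(\text{flights})}\max_{\mathbf{P}\in P}\sum_t q_t\,\mathbf{P}[T_t]$, so it is enough to exhibit one adversary distribution $q$ under which every pure strategy captures little $q$-mass. The crucial structural point is that, because of the target allocation constraint $\mathbf{x}[T_t]\le 1$ (no flight carries two marshals) and $\sum_j x_{i,j}\le 1$ (each marshal on at most one schedule), every FAMS pure strategy selects at most $K$ \emph{pairwise flight-disjoint} schedules, i.e.\ a packing rather than a cover; one then uses the fine structure of the hard Set Cover instances (their optimal solutions are partitions, their ``no'' case stays robust when restricted to disjoint sub-collections, and the auxiliary-flight gadget makes uncovered elements expensive) to argue that no such packing of $\le K$ sets captures more than an $O(1/\log|E|)$ fraction of the chosen $q$-mass. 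Since the FAMS instance has size polynomial in $|E|$, we have $\log|E|=\Theta(\log(\text{FAMS size}))$, so a polynomial-time $O(\log(\text{FAMS size}))$-approximation for FAMS would decide the NP-hard Set Cover gap problem, forcing $P=NP$; starting from the Label-Cover source yields the same contradiction for every logarithmic factor.

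I expect the soundness step to be the main obstacle. The difficulty is exactly the mismatch between Set Cover (a covering problem) and FAMS pure strategies (flight-disjoint packings): the gadget connecting the two must be designed so that the yes-case partition still yields full coverage while, in the no-case, no small disjoint packing of schedules wins much $q$-mass against a well-chosen adversary distribution. Squeezing this discrepancy down to a genuinely logarithmic rather than merely constant gap is what forces the use of the structured hard instances and, if one insists on ``any logarithmic factor'' literally, of a Label-Cover-strength starting point or an explicit gap-amplification step.
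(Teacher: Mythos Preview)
Your proposal has a genuine gap: the soundness step is not an argument but a hope. You correctly observe that, because of the target allocation constraint $\mathbf{x}[T_t]\le 1$, every FAMS pure strategy picks a \emph{flight-disjoint} collection of schedules, i.e.\ a packing. Starting from Set Cover therefore creates exactly the covering/packing mismatch you flag, and nothing in the proposal explains how the ``auxiliary flights'' and ``fine structure of the hard Set Cover instances'' bridge that gap. Absent such a gadget, the no-case of Set Cover says nothing about what a small disjoint packing can capture against your distribution $q$, so the claimed $O(1/\log|E|)$ bound is unsupported. Escalating to Label Cover does not help here: the obstacle is structural (packing vs.\ cover), not the strength of the source inapproximability.

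The paper sidesteps this mismatch entirely by reducing from \emph{Maximum Independent Set}, which is already a packing problem. Schedules are vertices; for every edge $(u,v)$ a dummy target is placed in both $s_u$ and $s_v$ with equal defended/undefended utility, so the target allocation constraint forces any pure strategy restricted to $s_1,\dots,s_n$ to pick an independent set. Additional ``valuable'' targets and singleton schedules are added so that, with $2n-k$ resources, the SSE value equals $k$ exactly when an independent set of size $k$ exists. Running the approximate FAMS solver on polynomially many instances (one per resource budget) and reading off a pure strategy from the output mixed strategy then yields an independent set whose size is within the same factor of optimum. Since Independent Set is poly-APX-hard (inapproximable within $n^{1-\epsilon}$), this immediately rules out every logarithmic factor without needing Label Cover or any gap amplification. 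If you want to salvage your route, the natural fix is to abandon Set Cover for Independent Set (or Set Packing) so that the disjointness forced by FAMS works for you rather than against you.
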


Both the proofs above use AP reduction from max independent set. The proof for TSG follows from an observation that a special case of the TSG problem is the independent set problem itself, thus, any approximation for TSGs can provide an equivalent approximation for the max independent set problem. The proof for FAMS is much more involved. It involves constructing a poly number of FAMS instance (with varying number of resources) given any independent set problem. The FAMS instances are such that pure strategies corresponds to independent sets and the multiple instances are constructed such that the optimal exact solution for one of these instances provides a solution for the max independent set problem. Then, it is shown that any approximation $r$ for the FAMS problem yields at least one pure strategy (over all the FAMS instances) that corresponds to a better than $r$ approximation for the max independent set problem, thereby completing the AP reduction. See the supplementary material for details.

\section{Approximation approach}
Our approach to approximation first solves the $marginalLP$, which is quite fast in practice (see experiments) and provides an upper bound to the true value of the game. Then, we sample from the marginal solution, but unlike previous work~\cite{tsai2010urban}, we alter the sampled value to ensure that the final pure strategy output is valid. 
We describe an abstract sampling and alteration approach for ARA in this part, which we instantiate for FAMS and TSGs in the subsequent sub-sections. Recall that a constraint is given by an index set $S$ and the constraint is an equality if $n_S = N_S$. For our abstract approach we restrict our attention to ARAs with \emph{partitioned equality assignment constraints}, which means the index set $S$ for all equality constraints partitions the index set $\{1, \ldots, k\}\times \{1, \ldots, n\}$ of the allocation matrix. Further, for inequality constraints we assume $n_S = 0$. Call these problems as PE0-ARA; this class still includes FAMS and TSGs. For FAMS, which does not have equality constraints, we use dummy schedules $s_i$ to get partitioned equalities $\sum_j x_{i,j} + s_i = 1$; $s_i = 1$ denotes that resource $i$ is unallocated. Our abstract approximation approach for PE0-ARA is presented in Algorithm~\ref{abstractalgorithm}.

\begin{algorithm}[t]
\caption{Abstract Approximation}\label{abstractalgorithm}
\DontPrintSemicolon
\ForAll{$S \in $ EqualityConstraints}{
$\mathbf{x} \leftarrow CombSample(\mathbf{x}^m,S)$
} 
%\ForAll{$x_{i,j}$ not integral}{
%$\mathbf{x} \leftarrow RandomizedRound(\mathbf{x}, x_{i,j})$
%}
$\mathbf{x} \leftarrow FixViolatedInequalityConstraints(\mathbf{x})$ \;
$\mathbf{x} \leftarrow FixEqualityConstraints(\mathbf{x})$ \;
\end{algorithm}

The Algorithm takes as input the marginal solution $\mathbf{x}^m$ from $marginalLP$ and produces a pure strategy. The first for loop (line 1-2) performs comb sampling for each equality constraint $S$ to produce integral values for the variables involved in $S$. Comb sampling was introduced in an earlier paper~\cite{tsai2010urban}; it provides the guarantee that $x_{i,j}^m$ is rounded up or down for all $(i,j) \in S$, the sample $x_{i,j}$ has expected value $E(x_{i,j}) = x^m_{i,j}$ for all $(i,j) \in S$ and equality S is still satisfied after the sampling. See Figure~\ref{IllusAlgo} for an example. Briefly, comb sampling works by creating $Z$ buckets of length one each, where $Z = \sum_{(i,j) \in S} \{x^m_{i,j}\}$, where $\{.\}$ denotes fractional part. Each of the $\{x^m_{i,j}\}$ length fraction is packed into the bucket (in any order and some of the $\{x^m_{i,j}\}$ fraction may have to be split into two buckets), then a number between $[0,1]$ is sampled randomly, say $z$, and for each bucket a mark is put at length $z$. Finally, the $(i,j)$ whose $\{x^m_{i,j}\}$ fraction lies on the marker $z$ for each bucket is chosen to be rounded up, and all other $x^m_{i,j}$ are rounded down. %See Figure~\ref{} for an illustration.

%Next, the remaining non-integral $\{x^m_{i,j}\}$'s that do not belong to any equality constraint are rounded down with probability $\{x^m_{i,j}\}$ or else they are rounded up (line 3-4). 
%This follows the standard randomized rounding paradigm. 
Observe that in expectation the output of comb sampling matches the marginal solution, thus, providing the same expected utility as the marginal solution. Recall that this expected utility is an upper bound on the optimal utility. However, the samples from comb sampling may not be valid pure strategies.
Thus, in case the output of comb sampling is not already valid, the two abstract methods in line 3 and 4 modify the sample strategy by first decreasing the integral values to satisfy the violated inequalities and then increasing the integral values to satisfy the equalities. Such modification of the sampled strategy to obtain a valid strategy is guided by the principle that the change in defender utility between the sampled and the resultant valid strategy should be small, which ensures that change in expected utility from the marginal solution due to the modification is small. As the expected utility of the marginal solution is an upper bound on the optimal expected utility this marginal expected utility guided modification leads the output expected utility to be close to the optimal utility. 

These two methods on line 3 and 4 are instantiated with domain specific heuristics that implement the principle of marginal expected utility guided modification. Below, we show the instantiation for the TSG and FAMS domains. A sample execution of the above approach for TSGs is shown in Figure~\ref{IllusAlgo}. 
%Due to the fundamental limitation on  approximation proved earlier, we cannot hope to obtain close approximation in the worst case theoretical analysis. %However, we show via our experiments that the average case approximation loss is within acceptable limits. 

\subsection{TSG}
\begin{figure}[t] 
  \centering
    \includegraphics[width=\linewidth]{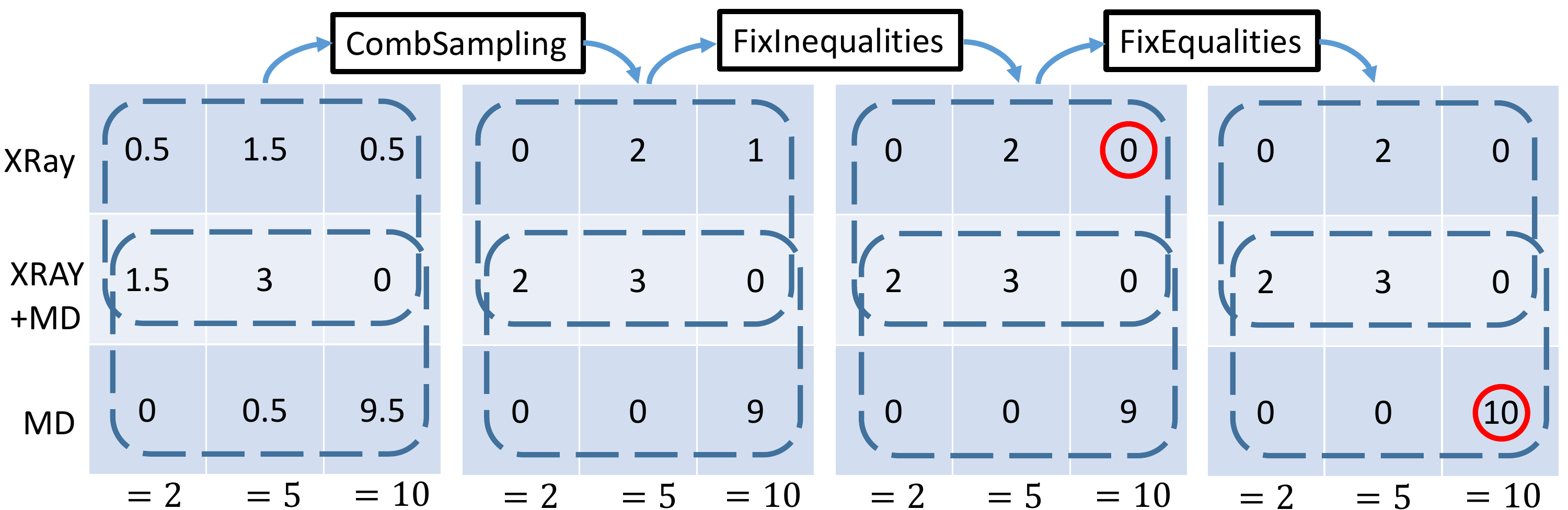}
    \caption{\small (Left to right) Sample execution for TSG: This uses the same TSG example as in Figure~\ref{Illus}. The marginal solution is the leftmost matrix which after CombSampling on each column becomes integral, e.g., 0.5 in the left column is rounded down to 0 and 1.5 rounded up to 2. Note that the CombSampling output satisfies all equalities, but exceeds the resource capacity 7 for X-ray. Next, allocation values are lowered (shown as red circle) to satisfy the X-Ray capacity but the equality constraint on third column gets violated. Next, allocation values are increased (again red circle) to fix the equality and that produces a valid pure strategy.}\label{IllusAlgo}
\end{figure}

The heuristics for TSG are guided by three \emph{observations}: (1) more effective resources are more constrained in their usage, (2) changing allocation for passenger categories with higher number of passengers changes the probability of detection of adversary by a smaller amount than changing allocation for category with fewer passengers and (3) higher risk passenger categories typically have lower number of passengers.

Recall that for TSGs the inequalities are resource capacity constraints.
Thus, for fixing violated inequalities we need to decrease allocation which decreases utility; we wish to keep the utility decrease small as that ensures that the expected utility does not move much further away from the upper bound marginal expected utility. Our approach for such decrease in allocation has the following steps: (a) prioritize fixing inequality of most violated resources first and (b) for each such inequality we attempt to lower allocation for passenger category with higher number of passengers. In light of the observations for TSG above this approach aims to keep the change in expected utility small. Specifically, observation 1 makes it likely that constraints for more effective resources are fixed in step a above. Observation 3 suggests that the changes in step b happens for lower risk passengers. Thus, step a aims to keep the allocation of effective resources for high risk passengers unchanged. This keeps the utility change small as changing allocation for high risk passengers can change utility by a large amount.
Next, by observation 2, step b aims to minimize the change in probability of detecting the adversary by a low amount so that expected utility change in small.
 For example in Figure~\ref{IllusAlgo}, the inequality fix reduces the allocation for the third passenger category (column) which also has the highest number of passengers (15).

Next, the equalities in TSGs are the constraints for every passenger category. For fixing equalities we need to increase allocation which increases utility; we wish to keep this utility increase high as it brings the expected utility closer to the upper bound marginal expected utility. Here we aim to do so by (a) prioritizing increase of allocation for categories with fewer people and (b) increasing allocation of those resources that have least slack in their resource capacity constraint (low slack could mean higher effectiveness). By Observation 1 low slack means that resource could be more effective and by Observation 2 fewer people means higher risk passengers. This ensures that higher risk passengers are screened more using more effective resources thereby raising the utility maximally. For example in Figure~\ref{IllusAlgo}, the equality for the third column is fixed by using the only available resource MD.

Recall that TSGs differs from FAMS in that the allocation for TSGs can be non-binary. This offers an advantage for TSGs with respect to approximation, as small fractional changes in allocation do not change the overall allocation by much (0.5 to 1 is a 50\% change in FAMS, but 4.5 to 5 is less than 10\% change for TSGs). Thus, we assume here that the changes due to Algorithm~\ref{abstractalgorithm} do not reduce the probability of detecting an adversary in any passenger category (from the marginal solution) by more than $1/c$ factor, where $c > 1$ is a constant. This restriction is realistic as it is very unlikely that any passenger category will have few passengers and we only aim to change the allocation for passenger categories with a higher number of passengers. Hence we prove the following
\begin{theorem} \label{TSGApp}
Assume that Algorithm~\ref{abstractalgorithm} always successfully outputs a pure strategy and the change in allocation from the marginal strategy does not change the probability of detecting an adversary by more than $1/c$ factor.
Then, the approximation approach above with the heuristic provides a $c$-approximation for TSGs.
\end{theorem}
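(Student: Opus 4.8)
The plan is to chain three facts: (i) the $marginalLP$ optimum $z^m$ upper-bounds the true SSE value $OPT$; (ii) the comb-sampling loop (lines 1--2) is unbiased, so in expectation the sampled strategy has the same detection probability per category as $\mathbf{x}^m$; and (iii) the hypothesis controls how much the fix-up steps (lines 3--4) erode the detection probability. First I would record that, since $conv(P)\subset MgS$, the max--min value of the $marginalLP$ satisfies $z^m = \max_{\mathbf{x}\in MgS}\min_t U_d(\mathbf{x},t) \ge \max_{\mathbf{x}\in conv(P)}\min_t U_d(\mathbf{x},t) = OPT$ (for the Bayesian TSG form, replace $\min_t U_d(\mathbf{x},t)$ by $\sum_\theta p_\theta \min_{t\in\mathcal{T}_\theta} U_d(\mathbf{x},t)$ inside the outer $\max_{\mathbf{x}}$; every step below is identical). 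Let $\mathbf{x}^m$ be the $marginalLP$ solution and let $\mathbf{X}$ denote the random pure strategy produced by Algorithm~\ref{abstractalgorithm} (well-defined because we assume the algorithm always succeeds); the algorithm's output is the mixed strategy $\mu$ given by the law of $\mathbf{X}$, and I must show its guaranteed value against a best-responding adversary is at least $OPT/c$.

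Next I would invoke the stated property of comb sampling: because the equality constraints of a PE0-ARA partition the index set, comb sampling can be run block by block, and after line 2 the random integral vector $\mathbf{x}'$ satisfies every equality constraint and has $\mathbb{E}[x'_{i,j}] = x^m_{i,j}$ for all $(i,j)$. Since the detection probability $c_t(\mathbf{x}) = \sum_{(i,j)\in T} E_i x_{i,j}/N_j$ is linear in $\mathbf{x}$, linearity of expectation gives $\mathbb{E}[c_t(\mathbf{x}')] = c_t(\mathbf{x}^m)$ for every passenger category $t$. I would then apply the theorem's hypothesis: the alteration in lines 3--4 reduces each category's detection probability by at most a factor $c$, i.e.\ $c_t(\mathbf{X}) \ge c_t(\mathbf{x}^m)/c$ for every realization and every $t$ (the line-4 equality fix only raises allocations, so it never decreases $c_t$; the bound really concerns the inequality fix of line 3). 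Taking expectations over the sampling randomness, $\mathbb{E}_\mu[c_t(\mathbf{X})] \ge c_t(\mathbf{x}^m)/c$.

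Finally I would convert the detection-probability bound into a utility bound. Writing $U_d(\mathbf{x},t) = U^t_u + c_t(\mathbf{x})\,(U^t_s - U^t_u)$ and using that for the TSG payoffs $U^t_s \ge U^t_u \ge 0$ (catching the adversary is at least as good as missing, and both are normalized nonnegative), scaling $c_t$ down by a factor $c$ scales $U_d$ down by at most the same factor, since the constant term obeys $U^t_u \ge U^t_u/c$; concretely $U^t_u + (c_t(\mathbf{x}^m)/c)(U^t_s-U^t_u) - \tfrac1c U_d(\mathbf{x}^m,t) = (1-\tfrac1c)U^t_u \ge 0$. Hence $\mathbb{E}_\mu[U_d(\mathbf{X},t)] \ge \tfrac1c U_d(\mathbf{x}^m,t) \ge \tfrac1c z^m \ge \tfrac1c OPT$ for every $t$, and since the adversary best-responds to $\mu$, the defender's guaranteed value is $\min_t \mathbb{E}_\mu[U_d(\mathbf{X},t)] \ge OPT/c$ (in the Bayesian case $\sum_\theta p_\theta \min_{t\in\mathcal{T}_\theta}\mathbb{E}_\mu[U_d(\mathbf{X},t)] \ge \tfrac1c z^m \ge \tfrac1c OPT$ by the same steps). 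Thus $\mu$ is a feasible mixed strategy of value $\ge OPT/c$, i.e.\ a $c$-approximation.

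I expect the main obstacle to be the last translation step rather than the sampling: a multiplicative guarantee is natural for the detection probability $c_t$, but $U_d$ is only affine in $c_t$, so the factor $c$ passes cleanly from $c_t$ to $U_d$ only under the normalization $U^t_u \ge 0$ (or an equivalent assumption on the TSG payoff structure); without it one gets an additive-type rather than a multiplicative guarantee. A secondary point to handle carefully is bookkeeping of the quantifiers: the hypothesis is a per-realization statement, so it must be pushed through the expectation and then through the $\min$ over targets (and, for Bayesian TSG, through the $p_\theta$-average over types) in the correct order, and one should note that a distribution over pure strategies is a bona fide mixed strategy, so comparing the value it guarantees to $OPT$ is legitimate.
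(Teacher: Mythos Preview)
Your proposal is correct and follows essentially the same approach as the paper's (very brief) proof: use the hypothesis to bound the drop in each detection probability by a factor $1/c$, transfer this to the affine utility, and compare against the $marginalLP$ upper bound on $OPT$. You are more explicit than the paper on the bookkeeping (the $z^m\ge OPT$ step, the expectation/$\min$ order, the Bayesian averaging), and the normalization $U^t_u\ge 0$ that you flag is exactly the implicit assumption the paper's one-line jump from ``$c_t$ drops by at most $1/c$'' to ``$U_d$ drops by at most $1/c$'' (and the analogous step in its FAMS proof) relies on.
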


As a remark, the above result does not violate the inapproximatability of TSGs since the above result holds for a restricted set of TSG problems. Also, the approximation for TSGs may sometimes fail to yield a valid pure strategy as satisfying the equalities may become impossible after using certain sequences of decreasing allocation. In our experiments we observe that the failure of obtaining a pure strategy after Algorithm~\ref{abstractalgorithm} is rare and also easily handled by repeating the approximation approach with a new sample (approximation runs in milli-secs).

\subsection{FAMS}
Recall that for FAMS the inequalities are the target allocation constraints: $\mathbf{x}[T] \leq 1$ and fixing violations for these involves decreasing allocation. Our heuristic is simple: the variables $x_{i,j}$ are set to zero (i.e., decreased) starting from those schedules $j$ that contain the most number of targets for which target allocation constraint is violated and do not contain any target for which the target allocation constraint is satisfied. We can guarantee to find a decrease in allocation that satisfies the constraint for $T$ without changing the allocation for targets that already satisfy constraints in the cases when the target $T$ with violated constraint (1) belongs to a schedule $j$ that exclusively contains that target $T$ ($x_{i,j}$ can be decreased without affecting any other constraint) or (2) $T$ belongs to only one schedule (all other targets in this schedule will violate their constraints). This approach ensures that we only work to fix the violated constraints and cause a minimal change in utility by leaving the satisfied constraints undisturbed. However, 
if in fixing a violated target allocation constraint for $T$ it becomes necessary to reduce allocation for another already satisfied target constraint, then sample uniformly from the $\geq 2$ schedules that $T$ belongs to in order to choose the $x_{i,j}$ allocation to reduce. Do this till all inequality constraints are satisfied. 

Then, we do nothing to fix equality constraints since we have only decreased $x_{i,j}$ and if any equality $\sum_j x_{i,j} + s_i = 1$ is not satisfied we can always set the dummy $s_i$ to be one. Also, observe that since we only always decrease allocations, we always find a pure strategy for any sample from Algorithm 1 (unlike TSGs). We prove the following:

\begin{theorem} \label{FAMSApp}
Let $C_t$ be the number of targets that share a schedule with any target $t$, and $C = \max_t C_t$.
The approximation approach above with the heuristic provides a $2^Ck$-approximation for FAMS.
\end{theorem}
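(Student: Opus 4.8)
The plan is to compare the (random) pure strategy produced by Algorithm~\ref{abstractalgorithm} against the optimal solution $\mathbf{x}^m$ of the $marginalLP$, carrying out the whole analysis at the level of coverage probabilities $c_t=\mathbf{x}[T_t]$. Three facts set up the skeleton. First, since the $marginalLP$ optimizes over $MgS\supseteq conv(P)$, its value $v^m$ upper bounds the exact SSE value $OPT$, and the optimizer satisfies $U_d(\mathbf{x}^m,t)\ge v^m$ and $\mathbf{x}^m[T_t]\le 1$ for every target $t$. Second, because the equality constraints here are the per-resource constraints $\sum_j x_{i,j}+s_i=1$, $CombSample$ rounds every $x^m_{i,j}\in[0,1]$ to $\{0,1\}$ with $E[x^{samp}_{i,j}]=x^m_{i,j}$; hence $c^{samp}_t:=\mathbf{x}^{samp}[T_t]$ is a nonnegative integer with mean $c^m_t$, and it is bounded by the number of assets, so $c^{samp}_t\le k$ and therefore $\Pr[c^{samp}_t\ge 1]\ge c^m_t/k$. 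Third, for FAMS the last step $FixEqualityConstraints$ only sets dummy variables $s_i$, so the output coverage equals the coverage after $FixViolatedInequalityConstraints$, and (as already argued in the paper) the output is always a valid pure strategy; since Algorithm~\ref{abstractalgorithm} is randomized it induces a mixed strategy with marginal $E[\mathbf{x}^{out}]$, whose utility against $t$ is $E[U_d(\mathbf{x}^{out},t)]$. So it suffices to lower bound $E[c^{out}_t]=\Pr[t\text{ covered in the output}]$ for each $t$.

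The core of the argument is a conditional claim: given any $CombSample$ outcome with $c^{samp}_t\ge 1$, the randomized de-allocation step leaves $t$ covered with probability at least $2^{-C}$. I would prove it by fixing one schedule $j^\ast$ that covers $t$ and survives $t$'s own repair (if $t$ itself is over-covered, its repair brings $\mathbf{x}[T_t]$ down to exactly one, leaving a unique such $j^\ast$). The key structural observation is that, as long as $t$ stays covered, $j^\ast$ contains the satisfied target $t$, so the deterministic ``clean'' de-allocations -- which only remove schedules containing no satisfied target -- can never remove $j^\ast$; thus $j^\ast$ is at risk only during the randomized fallback invoked while repairing some over-covered flight $f'$ lying on $j^\ast$, and there are at most $C_t\le C$ such flights. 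Each time the fallback removes a schedule it chooses uniformly among the $\ge 2$ currently-allocated schedules of $f'$, so $j^\ast$ survives one such removal with probability at least $1/2$; chaining over the at most $C_t$ offending flights yields survival probability at least $2^{-C_t}\ge 2^{-C}$.

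Combining the pieces, $E[c^{out}_t]\ge 2^{-C}\Pr[c^{samp}_t\ge 1]\ge c^m_t/(2^C k)$. Since $U_d(\mathbf{x},t)=U^t_u+c_t(U^t_s-U^t_u)$ is affine and nondecreasing in $c_t$, under the utility normalization in which the approximation ratio is defined (for which $U^t_s\ge U^t_u\ge 0$, so $OPT>0$) this gives $E[U_d(\mathbf{x}^{out},t)]\ge \tfrac{1}{2^C k}U_d(\mathbf{x}^m,t)\ge \tfrac{1}{2^C k}v^m\ge OPT/(2^C k)$ for every $t$; the induced mixed strategy (trimmed to $kn+1$ pure strategies by Carath\'eodory if a polynomial-size support is required) is therefore a $2^C k$-approximation.

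The step I expect to be the main obstacle is making the ``$1/2$ per offending flight, chained to $2^{-C}$'' argument fully rigorous. The subtlety is that repairing a single over-covered flight $f'$ may force several successive de-allocations rather than one: if $f'$ is over-covered on all $s'\ge 3$ of its allocated schedules, the probability that $j^\ast$ survives $f'$'s repair is $1/s'$, not $1/2$. Resolving this requires either (i) a sharper accounting showing the conditional survival probabilities one actually needs can be taken $\ge 1/2$ at each relevant step, or (ii) a charging/amortization argument tying the number of forced de-allocations incident to $j^\ast$ to $C_t$; in either case one also has to check that interleaving the deterministic clean phase with the randomized fallback, and the order in which over-covered flights are processed, introduces no harmful dependence. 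The remaining pieces -- the relaxation bound on $OPT$, the comb-sampling expectations, and the coverage-to-utility conversion -- are routine.
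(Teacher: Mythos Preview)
Your skeleton matches the paper's: bound the optimal by the marginal LP, use comb-sampling unbiasedness, then show each target's expected output coverage is at least $c_t^m/(2^Ck)$ and push this through the affine utility. The two proofs diverge only in how they extract the factor $1/k$. The paper conditions on the event $\lnot E_t$ that $t$ is \emph{feasible} (not over-covered) after comb sampling, and computes $P(\lnot E_t)\ge 1/k$ via an explicit calculation: writing $p_{t,i}=P(\text{resource }i\text{ covers }t)$, using $\prod_{i'\ne i}(1-p_{t,i'})\ge 1-\sum_{i'\ne i}p_{t,i'}$ and the sum-of-squares inequality $\sum_i p_{t,i}^2\ge (\sum_i p_{t,i})^2/k$ to get $P(E_t)\le 1-1/k$; it then assumes coverage $0$ on $E_t$ and claims $P(y_t=1\mid\lnot E_t)=x_t^m/2^C$. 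Your route is shorter and arguably cleaner: you condition on $c^{samp}_t\ge 1$ and use the one-line averaging bound $c_t^m=E[c^{samp}_t]\le k\cdot P(c^{samp}_t\ge 1)$. The price you pay is that you must also handle the case where $t$ itself is over-covered (you argue its own repair leaves exactly one surviving schedule $j^\ast$), whereas the paper's conditioning on feasibility sidesteps $t$'s own repair entirely.

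On the $2^{-C}$ survival factor, you and the paper make the same claim, and your honest flagging of the ``$1/s'$ versus $1/2$'' subtlety is apt: the paper simply asserts that $j^\ast$ survives each neighboring target's repair with probability at least $1/2$ without addressing what happens when some neighbor is covered by three or more schedules after sampling. So the obstacle you singled out is real and is not resolved in the paper's own proof either; the paper treats it as a worst-case heuristic statement rather than a rigorous bound. Everything else in your proposal---the relaxation bound, the comb-sampling expectations, the coverage-to-utility conversion, and the Carath\'eodory trimming---is routine and in line with the paper.
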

%\begin{proof}[Proof Sketch] The proof works by showing the expected utility for each target is $\geq 1/2^Ck$ of that from the marginal solution. This is realized by calculating the probability of a feasible assignment for any target as $ \geq1/k$, and noting that allocations stay 1 with prob $1/2^C$ when feasible and assuming the worst case allocation (0) for the infeasible case.
%\end{proof}

Next, we show experimental results that reveal the average case loss of our approximation approach (as opposed to the worst case approximation guarantees in this section).

\section{Experimental Results}
\begin{figure}[t]
    \centering
    \subfigure[Runtime (log-scale time)]{
        \includegraphics[width=0.47\columnwidth]{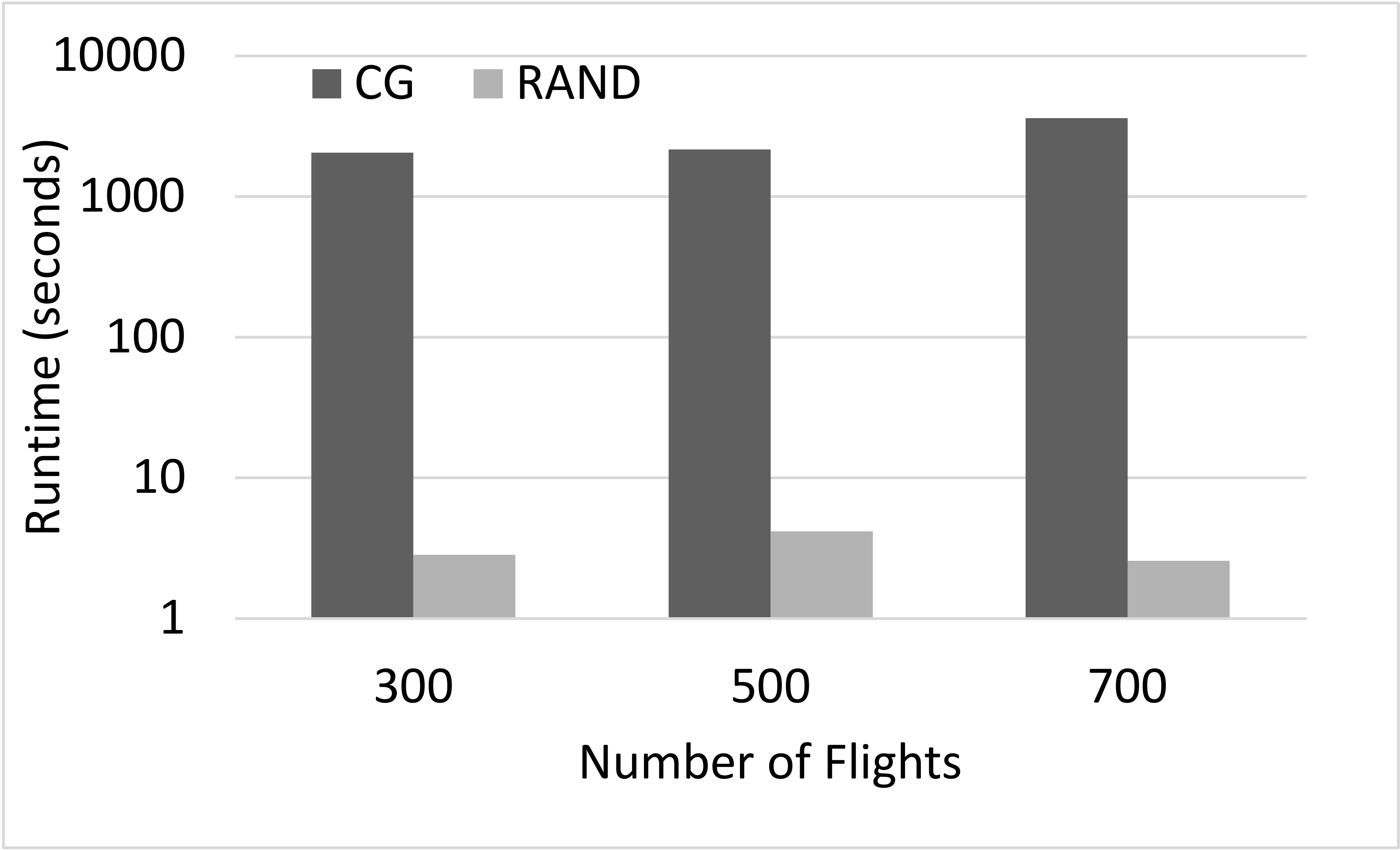}
        %\caption{A gull}
        \label{fig:FAMSRunTime}
    }
    %~ %add desired spacing between images, e. g. ~, \quad, \qquad, \hfill etc. 
      %(or a blank line to force the subfigure onto a new line)
    \subfigure[Solution quality]{
        \includegraphics[width=0.47\columnwidth]{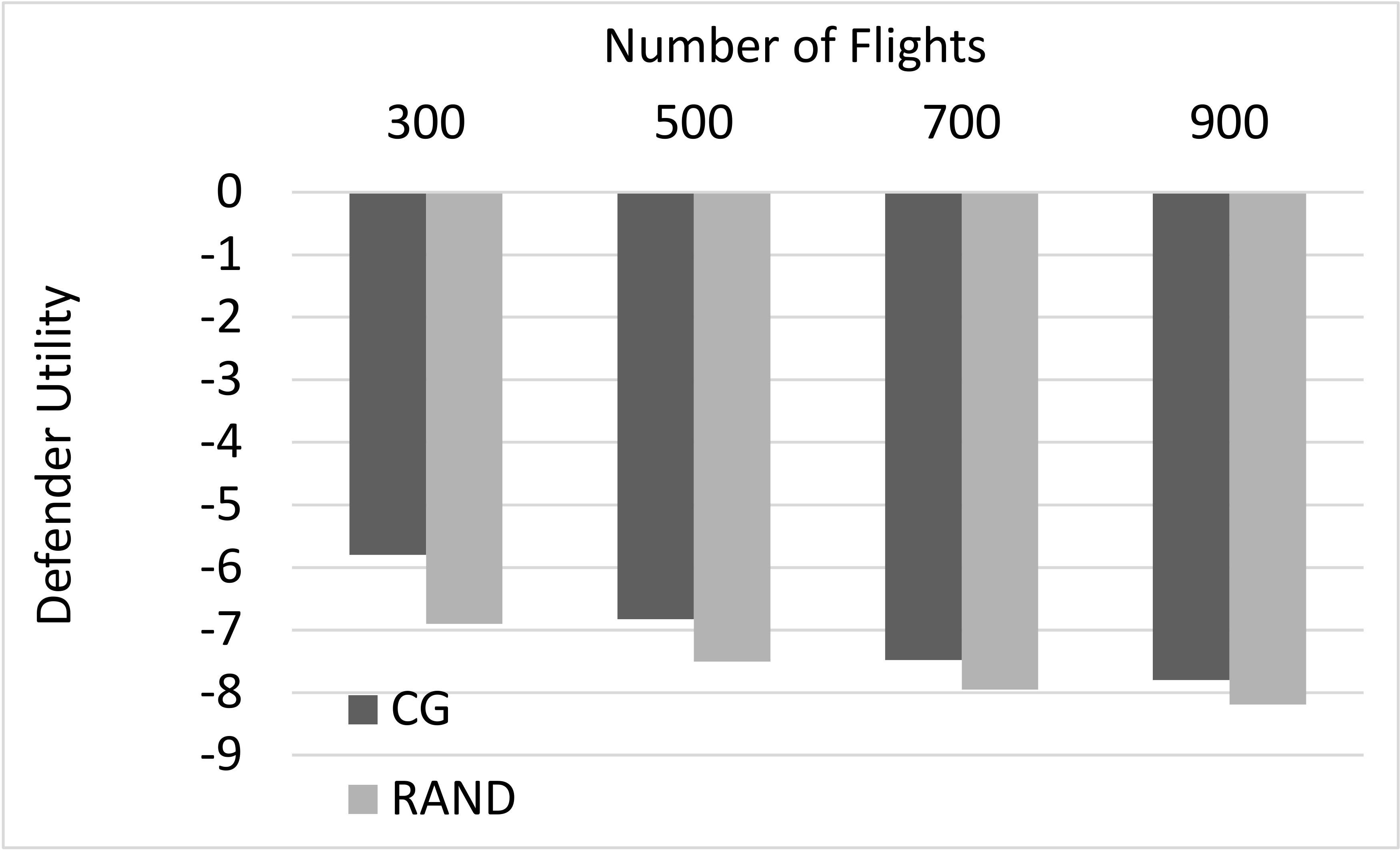}
        %\caption{A tiger}
        \label{fig:FAMSSol} 
    }
    \caption{ASPEN and RAND Comparison} \label{AMS}
\end{figure}

\begin{figure}[t]
    \centering 
    \subfigure[Runtime (log-scale time)]{
        \includegraphics[width=0.47\columnwidth]{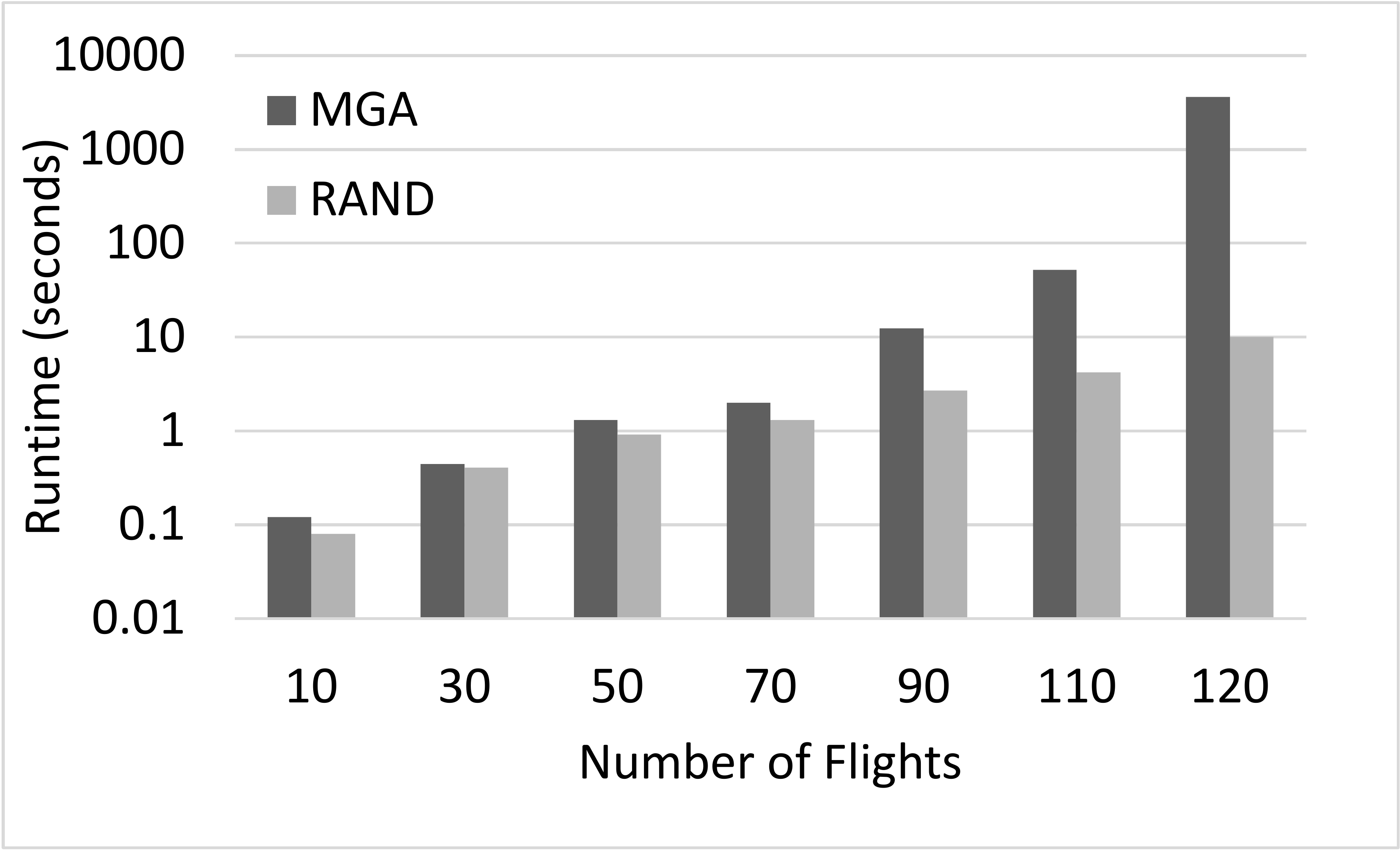}
        %\caption{A gull}
        \label{fig:MGARunTime}
    }
    %~ %add desired spacing between images, e. g. ~, \quad, \qquad, \hfill etc. 
      %(or a blank line to force the subfigure onto a new line)
    \subfigure[Solution quality]{
        \includegraphics[width=0.47\columnwidth]{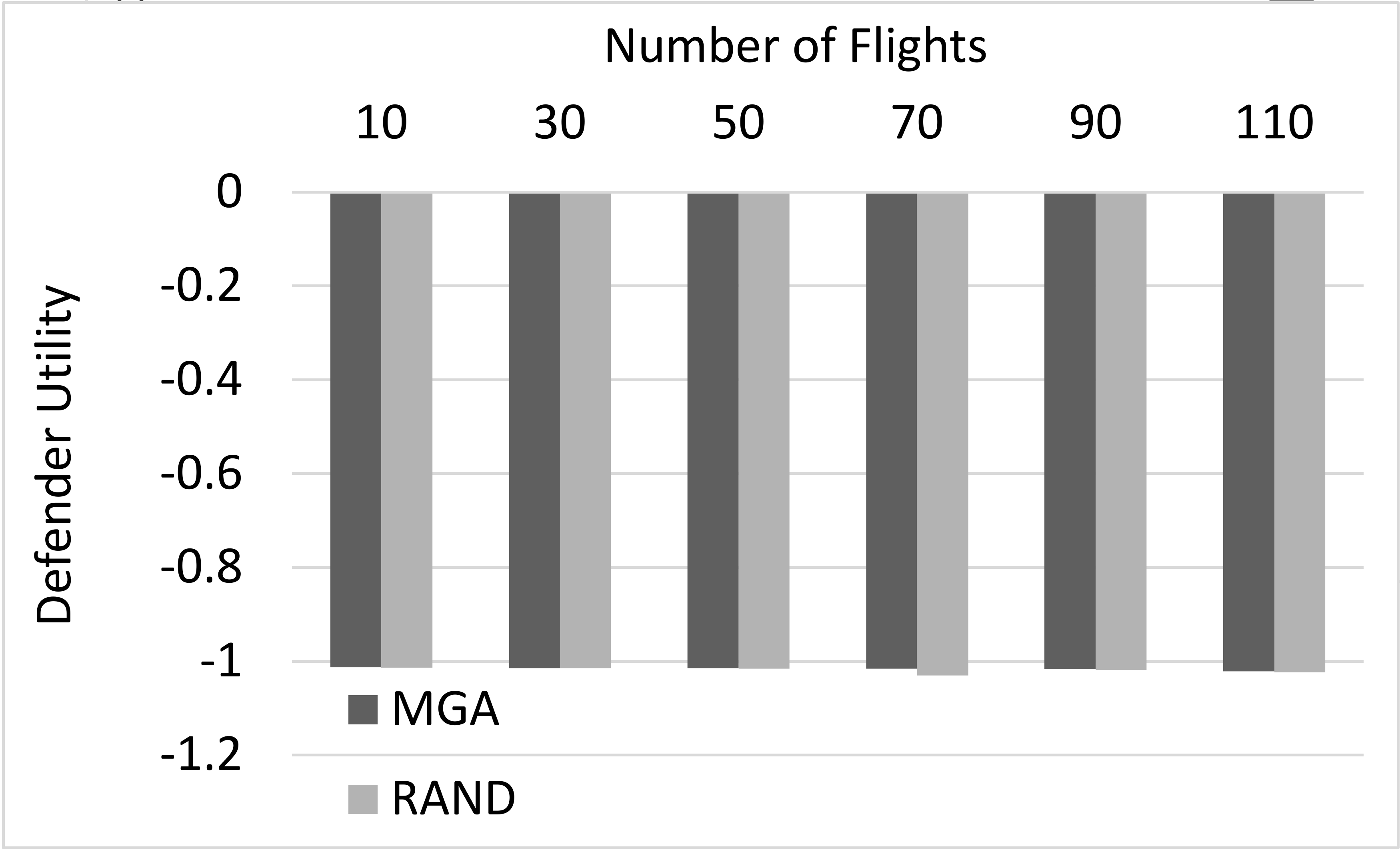}
        %\caption{A tiger}
        \label{fig:MGASol}
    }
    \caption{MGA and RAND Comparison} \label{MGA}
\end{figure}

Our set of experiments provide a comprehensive analysis of our randomized approach, which we name RAND. We compare RAND to the best know solver for TSGs called MGA; MGA~\cite{TSG2016} has been previously shown to outperform column generation based approaches by a large margin. For the FAMS problem the best known solver in literature for the general sum case is called the ASPEN~\cite{jain2010security}, which is a column generation based branch and price approach. Through private communication with the company (Avata Intelligence) managing the FAMS software, we know the FAMS problem is solved as a zero-sum problem for scalability and takes hours to complete. For our zero-sum case we implemented a plain column generation (CG) solver for FAMS, since branch and price is an overkill for the zero sum case. All results in this section are averaged over 30 randomly generated game instances. All game instances fix $U^t_s$ to $-1$ and randomly select $U^t_u$ from integers between $-2$ and $-10$. The utility for RAND is computed by generating 1000 pure strategies and taking their average as an estimate of the defender mixed strategy. All experiments
were run using a system with Xeon 2.6 GHz processor and 4GB RAM. 

For FAMS, we vary the number of flights, keeping the number of resources fixed at 10 and number of schedules fixed at 1000 and 5 targets/schedule. The runtime \emph{in log scale} is shown in Figure~\ref{AMS}. CG hits the 60 min cut-off for 700 flights and the run time for RAND is much lower at only a few seconds. 
%Also, it can be seen that the runtime of ASPEN does not always increase with flights but depends on flight/resource ratio, which has been observed as the d:s ratio phenomenon in SSGs~\cite{jain2012deployment}. 
Next, we report the solution quality for RAND by comparing with the solution using  CG. It can be seen that the solution gets better with increasing flights starting from 19\% loss at 300 flights to 5\% loss at 900 flights. Thus, the numbers show that we obtain large speed-ups up to factor of 1000x and are still able to extract 95\% utility for higher number of flights when the CG approach starts taking huge amount of time.

For TSGs, we used six passenger risk levels, eight screening resource types and 20
screening team types. We vary the number of fights and we also randomly sample the team structure (how teams are formed from resources) for each of the 30 runs. The results in Figure~\ref{MGA} show runtime (in log scale) and defender utility values varying with number of flights (on x-axis). As can be seen, MGA only scales up to 110 flights before hitting the cut-off of 60 mins, while RAND takes only 10 sec for 110 flights. Also, the solution quality loss for RAND has a maximum averaged loss of 1.49\%. Thus, we obtain at-least 360X speed-ups with very minor loss.

Finally, we test the scalability of RAND for FAMS and TSG, shown in Figure~\ref{ScaleRAND}. As can be seen, the runtime for RAND is low even with the highest number of flights we tested: 280 for TSG and 3500 for FAMS. The maximum runtime for FAMS was under 5 seconds. With TSGs the maximum runtime was under 25 secs. Overall, these results demonstrate the scalability of our approach with minor solution quality loss on average. Importantly, the scalability results shows that RAND can solve the FAMS and TSG problem for sizes that are required for a US wide deployment, and the results also provide evidence that further scalability with increasing number of flights is also possible. This also provides evidence that approximations is a viable approach to solving zero-sum SSGs at a large scale.
%It can be seen that the runtime improves with decreasing flights/resources ratio (see d:s ratio explanation earlier) but it again increases for very large number of flights. 
%Recall that for FAMS the number of variables in the marginal LP is resources x schedules and that for TSG is resource teams x categories (i.e., risk level, flight). Thus, we encountered out-of-memory issues when increasing the number of resources for FAMS and number of flights for TSG. However, the runtime for the successful instances was still quite low, which suggests that more available memory can allow for further higher scale up for RAND with very low runtime. 

\begin{figure}[t]
    \centering
    
    %~ %add desired spacing between images, e. g. ~, \quad, \qquad, \hfill etc. 
      %(or a blank line to force the subfigure onto a new line)
    \subfigure[Runtime FAMS]{
        \includegraphics[width=0.47\columnwidth]{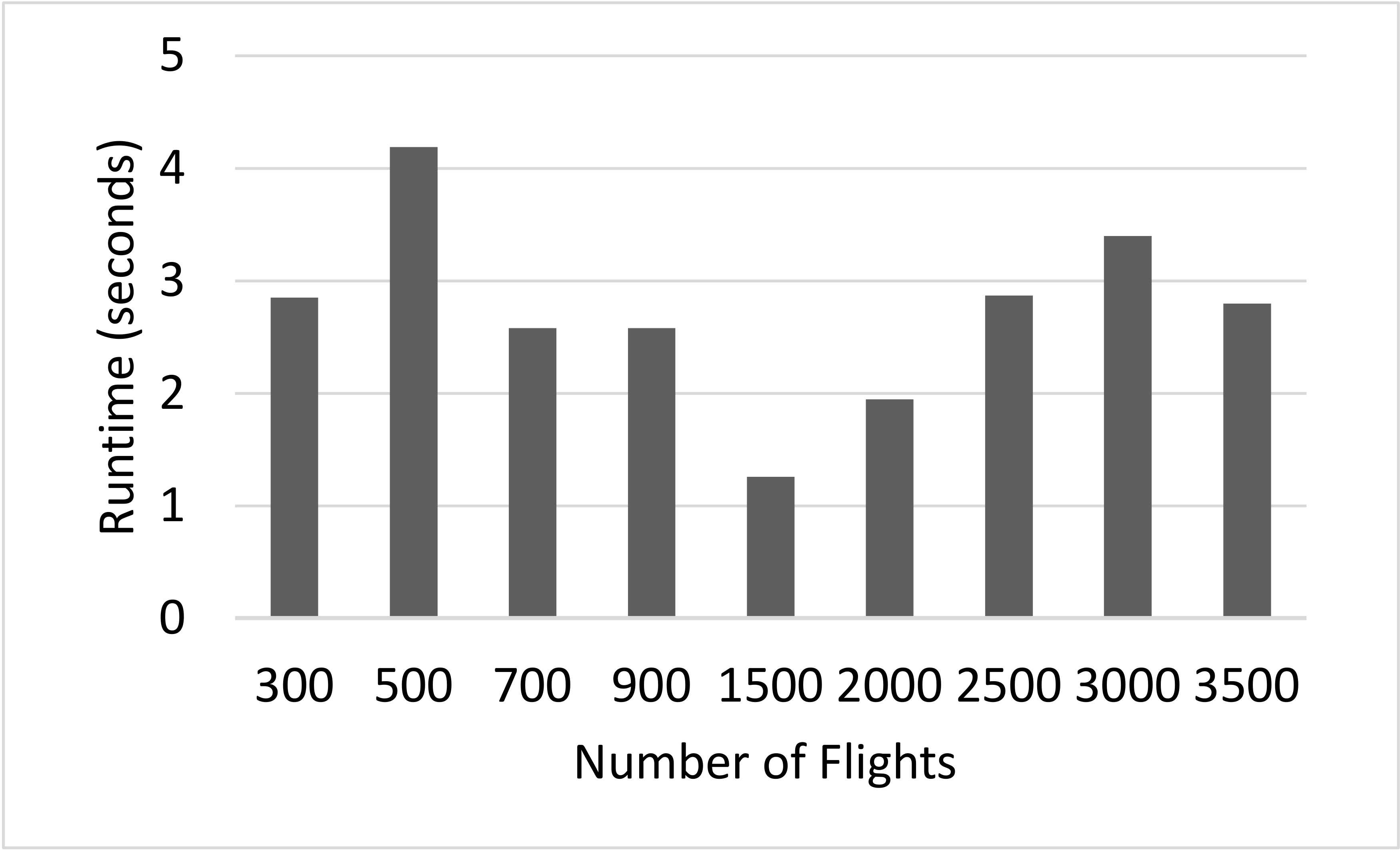}
        %\caption{A tiger}
        \label{fig:FAMSRRRuntime}
    }
    \subfigure[Runtime TSG]{
        \includegraphics[width=0.47\columnwidth]{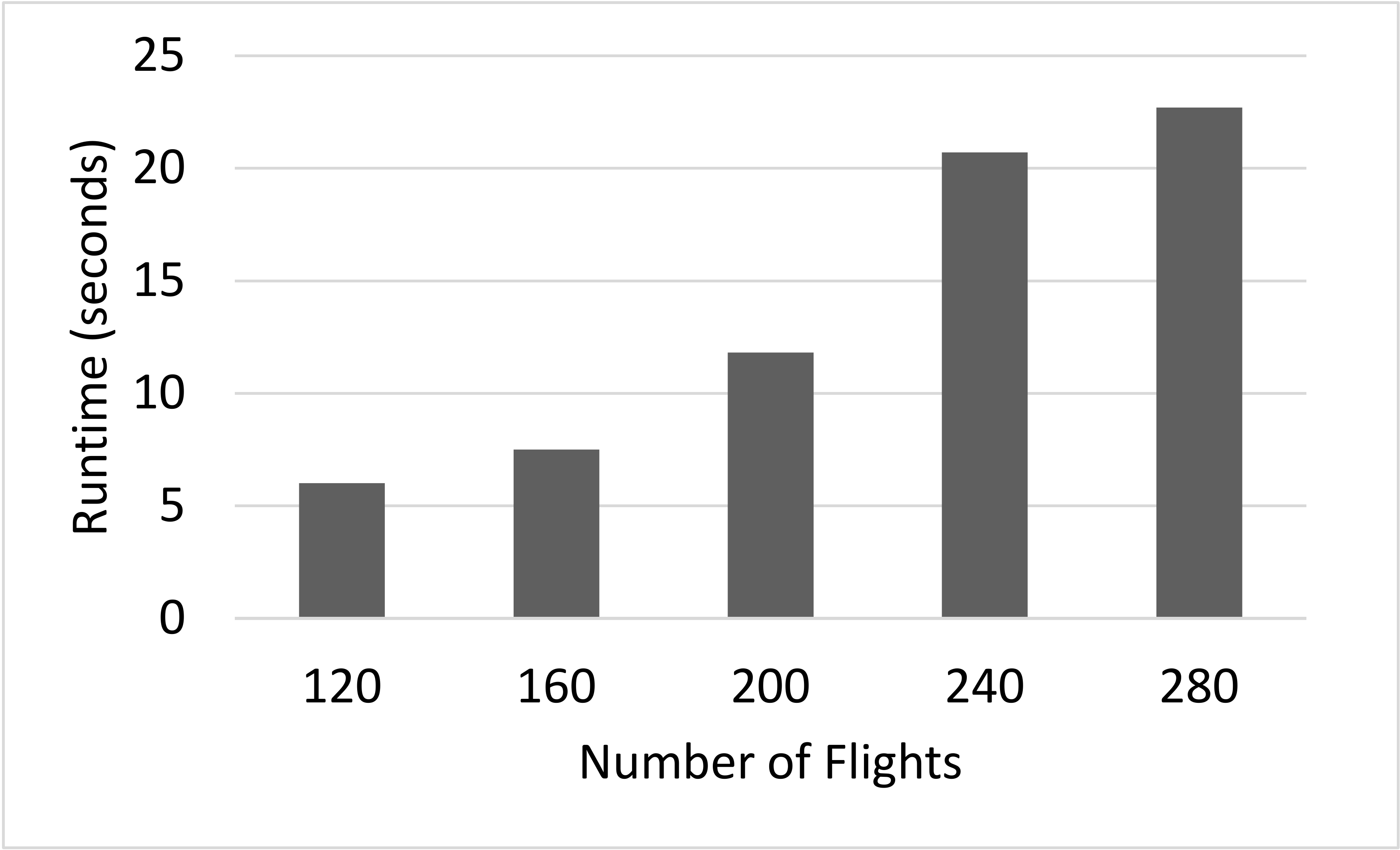}
        %\caption{A gull}
        \label{fig:TSGRRRunTime}
    }
    \caption{Scalability of RAND} \label{ScaleRAND}
\end{figure}

\section{Conclusion}
We studied approximations in SSGs both theoretically and practically. In fact, while our paper title talks about half a loaf (optimal solution), we practically obtain up to 97\% of the loaf (optimal solution). We obtain 1000x speedups with losses within 5\%. Thus, approximations open the door to solving very large scale zero-sum SSGs and we believe that we have laid a fertile ground for future research in approximating large scale zero-sum security games. Our approach also provided an avenue to solve the real world FAMS and airport screening problem in order to increase the scope of applicability of these already deployed application (in case of FAMS) or applications under test (airport screening).

\clearpage
%% The file named.bst is a bibliography style file for BibTeX 0.99c
\bibliographystyle{aaai}
\bibliography{aaai18}

\clearpage
\appendix
\section{Appendix}

\subsection{ARA complex model} It is easy to change the probability to be $c_t = f(\{x_{i,j} ~|~ (i,j) \in T\})$ so that the probability is non-linear (as given by f) in the entries. Also, clearly increasing number of targets to be large is allowed as targets can be any subset of the indexes. A general sum variant can also be stated by specifying the adversary utilities in terms of $c_t$. Of course, as in SSGs, the general sum case cannot be written as a LP but can be written as multiple optimizations or a MILP.

\subsection{Implementability}
Viewing SSGs as ARAs provides an easy way of determining implementability using results from randomized allocation~\cite{budish2013designing}. First, we define 
\emph{bi-hierarchical assignment constraints} as those that can be partitioned into two sets $H_1, H_2$ such that two constraints $S, S'$ in the same partition ($H_1$ or $H_2$) it is the case that either $S \subseteq S'$ or $S' \subseteq S$ or $S \cap S' = \phi$. Then, we obtain the following sufficiency result
\begin{proposition}
All marginal strategies are implementable, or more formally $conv(P) = MgS$, if the assignment constraints are bi-hierarchical. 
\end{proposition}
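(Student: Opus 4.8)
The plan is to prove the two inclusions $conv(P) \subseteq MgS$ and $MgS \subseteq conv(P)$ separately; only the second is nontrivial. For the first, every pure strategy $\mathbf{P}$ is by definition an integral point with $n_S \le \mathbf{P}[S] \le N_S$ for all $S$ and $\mathbf{P} \ge 0$, hence $\mathbf{P} \in MgS$; since $MgS$ is a polyhedron, $conv(P) \subseteq MgS$. For the second inclusion it suffices to show that $MgS$ is an \emph{integral} polytope, i.e.\ every vertex of $MgS$ has integer coordinates: then by the Minkowski--Weyl / Krein--Milman representation $MgS = conv(\mathrm{vert}(MgS))$, and each vertex, being integral and feasible, is a pure strategy, so $MgS \subseteq conv(P)$. (We use that $MgS$ is bounded, which holds in our setting since every coordinate is capped by some assignment constraint; for a fully general statement one compares recession cones, which are trivially equal, and handles the empty case directly since $conv(\emptyset)=\emptyset$.)

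The whole proof thus reduces to the statement: the polyhedron $\{\mathbf{x} \ge 0 : n_S \le \mathbf{x}[S] \le N_S \ \forall S\}$ with integral $n_S, N_S$ is integral whenever the family of index sets $\{S\}$ is bi-hierarchical. Write the constraints as $n \le A\mathbf{x} \le N$, where $A$ is the $0/1$ incidence matrix whose rows are the indicators $\mathbf{1}_S$. Using the bi-hierarchy, split the rows as $A=\binom{A_1}{A_2}$, with $A_1$ (resp.\ $A_2$) collecting the constraints of the laminar family $H_1$ (resp.\ $H_2$). The incidence matrix of a laminar family can be viewed as a network matrix (its columns are incidence vectors of root-paths in the laminar tree), hence is totally unimodular, so each of $A_1,A_2$ \emph{individually} defines an integral polyhedron. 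The main obstacle is that stacking two totally unimodular matrices need not be totally unimodular, so integrality of the combined system does \emph{not} follow from TU-ness and must be established combinatorially.

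To close this gap I would run the bihierarchy argument of Budish, Che, Kojima and Milgrom~\cite{budish2013designing}: take any $\mathbf{x} \in MgS$ with a fractional coordinate and show it is not a vertex. Build an auxiliary bipartite multigraph whose two vertex classes are the constraints of $H_1$ and of $H_2$, with one edge for each currently fractional entry $x_{i,j}$, joining the tightest $H_1$-constraint and the tightest $H_2$-constraint containing $(i,j)$. A counting argument on the laminar (forest) structure shows this graph has a cycle (every fractional coordinate is "free" in at least one direction within each hierarchy, so edges outnumber usable nodes); perturbing $\mathbf{x}$ by alternating $\pm\varepsilon$ along such a cycle keeps every $H_1$- and $H_2$-constraint satisfied and all coordinates nonnegative for small $\varepsilon>0$, exhibiting $\mathbf{x}$ as a strict convex combination of two distinct feasible points. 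Hence no fractional point is extreme, so all vertices are integral, which finishes the proof. The delicate points are (i) defining the graph so that a cycle perturbation provably preserves \emph{all} nested constraints of both hierarchies at once — this is exactly where the condition $S \subseteq S'$, $S' \subseteq S$, or $S \cap S' = \phi$ within each $H_\ell$ is used — and (ii) the counting step that guarantees a cycle. Since this is precisely the content of the cited bihierarchy theorem, an acceptable alternative is to invoke~\cite{budish2013designing} directly after matching notation.
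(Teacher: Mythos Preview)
Your proposal is correct and aligns with the paper's own proof, which consists of a single sentence invoking Theorem~1 of \cite{budish2013designing}; you arrive at the same citation after matching notation. The additional material you provide---the reduction to integrality of $MgS$ and the sketch of the bipartite cycle-perturbation argument---goes well beyond what the paper actually writes, but it is a faithful outline of the cited theorem's proof rather than a different route.
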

\begin{proof} The proof is just an application of Thm 1 in~\cite{budish2013designing}.
\end{proof}

In contrast to prior work that have identified cases of non-implementability~\cite{KorzhykCP10,letchford2013solving,TSG2016} for specific cases, this provides an easy way to characterize non-implementability across a wide range of SSGs.
As Figure~\ref{Illus} reveals, both FAMS and TSG have non-implementable marginals due to overlapping constraints. 
Next, as defined in~\cite{budish2013designing}, \emph{canonical assignment constraints} are those that impose constraints on all rows and columns of the matrix (with possibly additional constraints), using which we obtain the following necessity result
\begin{proposition}
Given canonical assignment constraints, if all marginal strategies are implementable then the assignment constraints are bi-hierarchical.
\end{proposition}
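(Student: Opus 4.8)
The plan is to prove the contrapositive: assuming the assignment constraints are canonical but \emph{not} bi-hierarchical, I would exhibit a marginal strategy $\mathbf{x} \in MgS$ that is not implementable, i.e.\ $\mathbf{x} \notin conv(P)$. The statement is the converse (necessity) half of the Birkhoff--von Neumann--type characterization of~\cite{budish2013designing}, so the shortest route is to invoke that theorem directly, exactly as the sufficiency Proposition does; what follows is the self-contained argument it rests on.

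First I would make the obstruction explicit. Saying that the family $\mathcal{S}$ of constraint index sets is not bi-hierarchical means its members cannot be $2$-coloured so that any two equally coloured sets are laminar (nested or disjoint). Form the ``conflict graph'' on $\mathcal{S}$ whose edges join pairs of sets that properly cross, i.e.\ overlap without nesting; non-bi-hierarchy is precisely non-$2$-colourability of this graph, hence it contains an odd cycle $S_1, S_2, \ldots, S_{2\ell+1}, S_1$ of consecutively crossing constraints. For each consecutive pair I would fix an index in $S_i \cap S_{i+1}$, and, using the crossing property, indices witnessing $S_i \setminus S_{i+1}$ and $S_{i+1}\setminus S_i$; these together with the canonical row/column constraints (which, being present on every row and column, pin down the remaining coordinates) form the combinatorial heart of the non-integrality.

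Next I would convert this odd cycle into an explicit witness. Choose integer bounds $n_S, N_S$ on the cycle sets and on the canonical constraints so that the natural half-integral assignment supported on the selected indices satisfies $n_S \le \mathbf{x}[S] \le N_S$ for all $S$ and $x_{i,j}\ge 0$, hence $\mathbf{x}\in MgS$, and so that the constraints tight at $\mathbf{x}$ (the odd-cycle bounds plus the canonical ones) are $kn$ linearly independent, making $\mathbf{x}$ a \emph{vertex} of the polytope $MgS$. Since $conv(P)\subseteq MgS$ always holds and $conv(P)$ is the convex hull of the integral points of $MgS$, any vertex of $MgS$ lying in $conv(P)$ would have to be one of those integral points; as $\mathbf{x}$ is fractional, $\mathbf{x}\in MgS\setminus conv(P)$, so not all marginal strategies are implementable, contradicting the hypothesis. (Formally this shows that for a canonical, non-bi-hierarchical structure there is a capacity choice breaking implementability, which is exactly the necessity direction established in~\cite{budish2013designing}.)

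The main obstacle is the structural step: extracting a \emph{minimal} odd crossing cycle and verifying that the canonical constraints do not accidentally ``repair'' the non-integrality it forces. This is where one must use that the structure is canonical --- every row and column is constrained --- rather than an arbitrary superset of constraints, so that the degrees of freedom left after imposing the cycle constraints are exactly pinned down and the half-integral point is genuinely a vertex. The subsequent vertex counting and the $conv(P)$ argument are routine. Since~\cite{budish2013designing} carries out precisely this analysis, in the paper I would cite the converse direction of their Theorem~1 and relegate the construction to the appendix.
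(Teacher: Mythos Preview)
Your proposal is correct and ultimately lands on the same approach as the paper: the paper's entire proof is the single sentence ``The proof is just an application of [Theorem] 2 in~\cite{budish2013designing}.'' You reach the same citation, but additionally unpack the mechanism behind that theorem (odd crossing cycle in the conflict graph $\Rightarrow$ half-integral vertex of $MgS$ $\Rightarrow$ $MgS\neq conv(P)$), which the paper does not attempt. Your parenthetical caveat about the necessity direction only producing \emph{some} capacity choice that breaks implementability is apt and matches exactly what the cited theorem gives; the paper's proposition is implicitly read at that level of generality as well.
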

\begin{proof} The proof is just an application of 2 in~\cite{budish2013designing}.
\end{proof}

\subsection{Proofs}
\textbf{Proof of Theorem~\ref{ARAHard}}:

\begin{proof} For the first part, given any ARA problem with a NP hard DBR-U instance (for the decision version of DBR-U), we construct another ARA instance such that the feasibility problem for that ARA instance solves the DBR-U decision problem. Thus, as the feasibility is NP Hard, there exists no approximation. First, since ARA problem is so general there exists DBR-U problems that are NP Hard. For example the DBR-U problem for FAMS has been shown to be NP Hard~\cite{Xu16a}. Given the hard DBR-U problem, form an ARA problem  with by adding the constraint $\mathbf{1} \cdot \mathbf{x} = k$. Also, let there be only one target $t$ in the problem, so that the objective becomes $U(\mathbf{x}, t)$ insted of $z$ and all constraints in the optimization are just the marginal space constraints and $\mathbf{1} \cdot \mathbf{x} = k$. Now, the existence of any solution of the optimization gives a feasible point as $\mathbf{x} = \sum_m a_m \mathbf{x_m}$, where the integral points $x_m$ is solution to the question of does there exist a solution of the DBR-U optimization problem with value $k$. Next, a binary search on $k$ can answer the decision DBR-U problem of whether there exists a solution for decision DBR-U with value $\geq k$. Thus, existence of any solution for ARA is NP Hard, thus, no approximation is easy.

For the second part, we present a AP approximation preserving reduction (with problem mapping that doe not depend on approximation ratio); such a reduction preserves membership in PTAS, APX, log-APX, Poly-APX (see ). 
Given any DBR problem, we construct the ARA problem with one target such that $T = \{1, \ldots, k\}\times \{1, \ldots, n\}$. Choose the weights $w_{i,j}$'s such that $w_{i,j} \propto d_{i,j}$ and $w_{i,j} \leq 1/\max_{\mathbf{x} \in MgS} \sum_{i,j} x_{i,j}$. Observe that $\max_{\mathbf{x} \in MgS} \sum_{i,j} x_{i,j}$ is computable efficiently and $\max_{\mathbf{x} \in MgS} \sum_{i,j} x_{i,j} \geq \max_{\mathbf{x} \in conv(P)} \sum_{i,j} x_{i,j}$, thus, the ARA is well-defined. Thus, due to just one target, the ARA optimization is same as $\max_{\mathbf{x} \in conv(P)} \mathbf{w} \cdot \mathbf{x}$. Suppose we can solve this problem with $r$ approximation with the solution mixed strategy being $
\mathbf{x}^\epsilon = \sum_{i=1}^m a_i \mathbf{x_i}$ for some pure strategies $\mathbf{x_i}$. Now, since $w_{i,j} \propto d_{i,j}$ we also know that this solution also provides $r$ approximation for DBR-C. Let the optimal solution for DBR-C be $OPT$; note that $OPT$ is also the optimal solution for DBR. $
\mathbf{x}^\epsilon$ provides a solution value $\mathbf{w} \cdot \mathbf{x}^\epsilon \geq OPT/r$. Further, as the objective is linear in $\mathbf{x}$ and $
\mathbf{x}^\epsilon = \sum_{i=1}^m a_i \mathbf{P_i}$, it must be the case that there exists a $j \in \{1, \ldots, m\}$ such that $\mathbf{w} \cdot \mathbf{P_j} \geq \mathbf{w} \cdot \mathbf{x}^\epsilon \geq OPT/r$.  Thus, since $\mathbf{P_j} \in P$, $\mathbf{P_j}$ provides $r$ approximation for DBR. Since, $m$ the number of the pure strategies in support of $\mathbf{x}^\epsilon$ is polynomial, $\mathbf{P_j}$ can be found in polynomial time by a linear search.
\end{proof}

\textbf{Proof of Theorem~\ref{TSGInapp}}:
\begin{proof}
Given an independent set problem with $V$ vertices, we construct a TSG with  $\{1, \ldots, V + 1\}$ team types, where each team type in $1, \ldots, V$ corresponds to a vertex. The $V+1$ team is special in the sense that it does not correspond to any vertex and it is made up of just one resource with a very large resource capacity $2V$. Construct just one passenger category with passengers $N = V+1$. Since, there is just one passenger category (and target) we will use $x_i$ as the matrix entries instead of $x_{i,j}$. Choose $U^t_s =  V+1$ and $U^t_u = 0$ and
efficiencies $E_i = 1$ for all teams, except $E_{V+1} = 0$. Then, the objective of the integer LP is $\sum_{i=1}^V x_i = \mathbf{1}_V \cdot \mathbf{x}$ where $ \mathbf{1}_V $ is a vector with first $V$ components as 1 and last component as 0. 

Next, have resources for every edge $(i,k) \in E$ with resource capacity $1$. This provides the inequality $\sum_{(i,k) \in E} x_i + x_j \leq 1$. Also, we have $x_{V+1} \leq 2V$. Inspection of every passengers provides the constraints $\sum_{i=1}^{V+1} x_i = V+1$. Treating $x_{V+1}$ as a slack, we can see that the constraint $x_{V+1} \leq 2V$ and $\sum_{i=1}^{V+1} x_i = V+1$ are redundant. For the left over constraints $\sum_{(i,k) \in E} x_i + x_j \leq 1$, we can easily check that any valid integral assignment (pure strategy) is an independent set. Moreover, the objective $\sum_{i=1}^V x_i $ tries to maximize the independent set. The optimal value of this optimization over $conv(P)$ is an extreme point which is integral and equal to the maximum independent set OPT.  Thus, suppose a solution $\mathbf{x}^\epsilon$ to the SSE problem with value $\geq OPT/r$. Further, as the objective is linear in $\mathbf{x}$ and $
\mathbf{x}^\epsilon = \sum_{i=1}^m a_i \mathbf{P_i}$, it must be the case that there exists a $j \in \{1, \ldots, m\}$ such that $ \mathbf{1}_V \cdot\mathbf{P_j} \geq  \mathbf{1}_V \cdot \mathbf{x}^\epsilon \geq OPT/r$.  Thus, since $\mathbf{P_j} \in P$, $\mathbf{P_j}$ provides $r$ approximation for maximum independent set. Since, $m$ the number of the pure strategies in support of $\mathbf{x}^\epsilon$ is polynomial, $\mathbf{P_j}$ can be found in polynomial time by a linear search.

\end{proof}

\textbf{Proof of Theorem~\ref{FAMSHard}}:

\begin{proof}

We provide an AP reduction from independent set. As max independent set is poly APX complete, this rules out any log factor approximation.

Given an independent set maximization problem with vertices $V$ and edges $E$ construct the following FAMS problems, one for each $k$. Use $2n - k$ resources. All resources can be assigned to any schedule. Construct schedules $s_1, \ldots, s_n$ corresponding to the vertices $v_1, \ldots, v_n$. Construct target $t_{e}$ corresponding to every edge $e = (u,v)$ such that $t_e \in s_u$ and $t_e \in s_v$. All $t_e$'s have the same value for being defended or undefended and that value is $n+2$; thus, these targets do not need to covered but impose the constraint that $s_u$ and $s_v$ cannot be simultaneously defended. Thus, it is clear that any allocation of resources to $s_1, \ldots, s_n$ corresponds to an independent set. Next, consider additional $2n$ targets and expand the schedules such that $t_{i}, t_{i+1} \in s_{i}$. Further, add more singleton schedules $s_{n+1}, \ldots, s_{3n}$ with $t_{i} \in s_{n+i}$. All additional targets $t_1, \ldots, t_{2n}$ provide value $k$ when defended and $k - 2n$ otherwise. Thus, the expected utility of defending an additional target $t$ given coverage $c_t$ is $c_t(k) + (1-c_t)(k-2n) = 2n*c_t + k - 2n$. 

First, assume we have a poly time algorithm to approximately compute the SSE with approx factor $r$ ($r > 1$). We will run this poly time algorithm with resources $2$ to $2n-1$ which is again a poly time overall, and also the overall output size is poly. For the given max independent set problem, let the solution be $k^*$. Observe that for problems with resources $2n - k$ where $k \leq k^*$, all valuable (additional) targets can be covered by covering $k^*$ schedules with $2k^*$ targets in $s_1, \ldots, s_n$  and using the remaining $\geq 2n - 2k^*$ resources to cover the remaining $2n - 2k^*$ targets. This provides utility of $k$ for the SSE. In particular, the utility with $2n -k^*$ resources is $k^*$. Also note that for every problem, there is always a trivial allocation of $2n-k$ resources to the $2n$ singleton schedules such that coverage of each target is $1 - k/2n$. (this is deducible as the allocation to singleton schedules is unconstrained and can be implemented in poly time by Birkhoff-von Neumann result as provided in~\cite{KorzhykCP10}). This trivial allocation provides an utility 0.

Also the following result will be useful: given approximation factor $r$ for the case with $2n - k^*$ resources then one of the pure strategy output for this case will have at least $k^* - l_{\min}$ schedules among $s_1, \ldots, s_n$ covered where $l_{\min} = \lfloor \arg\!\min_l \frac{k^*}{k*-l} > r \rfloor$. Before proving the above note that by definition $\frac{k^*}{k*-l_{\min} - 1} > r$ and $\frac{k^*}{k*-l_{\min}} < r$. 

To prove the result in last paragraph consider the contra-positive: suppose all pure strategies output cover at most $k^* - l_{\min}-1 $ schedules among $s_1, \ldots, s_n$, then in each pure strategy at least $l_{\min}+1$ targets are not covered (since 2 targets are covered for the $k^* - l_{\min}-1 $ schedules and rest of resources can cover only 1 target). Then the coverage of the least covered target in the mixed strategy formed using such pure strategies is $\leq 1 - (l_{\min}+1)/2n$. The utility for this least covered target is $\leq k^* - l_{\min} - 1$. The overall utility has to be lower than utility for any target, hence the utility is $\leq k^* - l_{\min} - 1$. The optimal utility is $k^*$. Thus, by definition of approximation ratio $r$ we must have $k^* - l_{\min} - 1 \geq k^*/r$ or re-arranging $\frac{k^*}{k^*-l_{\min} - 1} \leq r $ but by definition of $l_{\min}$ we must have $\frac{k^*}{k*-l_{\min} - 1} > r$ hence a contradiction.

The above result also shows that at least one pure strategy yields an independent set of size $k^* - l_{\min}$. As $k^*$ is the max size of independent sets we obtain and approximation ratio $r'$ for the max independent set problem such that $r' < \frac{k^*}{k^* - l_{\min}} < r$. Thus, we obtain a better approximation $r'$ given $r$ approximation for the SSE. Thus, we have an AP reduction.
\end{proof}

%The integer variables ${\sf n}_{c, t}$ can take values in ${0,1}$, the constraint $\sum_{t \in T \backslash V+1} {\sf n}_{c, t} \leq N_{c}$ is redundant and since ${\sf n}_{c, t}$ can only take integral values, any feasible setting of ${\sf n}_{c, t}$ for $t = 1, \ldots, V$ specifies an
%independent set. Then, as the objective maximizes $\sum_{t \in T \backslash V+1} {\sf n}_{c, t}$, the optimal output of the integer LP is the size of the maximum independent set. Suppose we can obtain an approximation of $1 - \epsilon$ for the integer LP, then the output (say $n_\epsilon$) is an independent set within $1 - \epsilon$ of the optimal solution. As the optimal solution is the size of the maximum independent set, the solution value $n_\epsilon$ is within $1 - \epsilon$ of the size of the maximum independent set. This mapping is trivially poly time. Thus, we have shown a strict approximation preserving reduction. 

\textbf{Proof of Theorem~\ref{FAMSApp}}:
\begin{proof}
Consider the event of a target $t$ having an infeasible assignment after the comb sampling. Call this event $E_t$. Let $C_{t,i}$ be the event that resource $i$ covers this target $t$. Then, $P(E_t) = \sum_{i} P(E_t|C_{t,i})P(C_{t,i})$.  From the guarantees of comb sampling we know that $P(C_{t,i}) = \sum_{j: (i,j) \in T} x^m_{i,j} \leq 1$ and $P(x_{i,j} = 1) = x^m_{i,j}$. Also, by comb sampling if $x_{i,j} = 1$ then $x_{i,j'} = 0$ for any $j'\neq j$. Next, we know that $P(E_t|C_{t,i})$ is the probability that the any of the other $x_{i',j}$ is assigned a one, which is $1 - $ the probability that all other $x_{i',j}$ are assigned 0. Thus, 
$$P(E_t|C_{t,i}) = 1 - \prod_{i' \neq i} (1- P(C_{t,i})) $$
Let $p_{t,i} = P(C_{t,i})$. Considering the fact that $\prod_i (1 - p_{t,i}) > 1 - \sum_i p_{t,i}$, we get 
$$1 - \prod_{i' \neq i} (1- P(C_{t,i})) \leq \sum_{(i',j): i'\neq i \land (i',j) \in T} x^m_{i',j} \leq 1 - \sum_j x^m_{i,j}$$
where the last inequality is due to the fact that $\sum_{(i,j) \in T} x^m_{i,j} \leq 1$. 

Thus, $P(E_t) \leq \sum_{i} (1 - p_{t,i})p_{t,i} \leq \sum_{i} p_{t,i} - \sum_{i} (p_{t,i})^2$. 
%Let $E$ be the event that any target $t$ has an infeasible assignment. Thus, by union bound $P(E) \leq \sum_{t,i} p_{t,i} - \sum_{i} (p_{t,i})^2$. Next, we know from standard sum of squares inequality that $\sum_{i} (p_{t,i})^2 \geq (\sum_{i} p_{t,i})^2/k\tau$. Thus, we get $P(E_t) \leq (\sum_{t,i} p_{t,i}) (1 - \sum_{i} p_{t,i}/k\tau)$ The RHS is maximized when $\sum_{i} p_i = 1$, thus, $P(E_t) \leq 1 - 1/k$. Also, then $P(\lnot E_t) \geq 1/k$
Next, we know from standard sum of squares inequality that $\sum_{i} (p_i)^2 \geq (\sum_{i} p_i)^2/k$. Thus, we get $P(E_t) \leq (\sum_{i} p_i) (1 - \sum_{i} p_i/k)$ The RHS is maximized when $\sum_{i} p_i = 1$, thus, $P(E_t) \leq 1 - 1/k$. Also, then $P(\lnot E_t) \geq 1/k$

Now consider the coverage of target $t$: $x^m_t = \sum_{(i,j) \in T} x^m_{i,j}$. According to our algorithm the allocation for target $t$ continues to remain $1$ with probability $(1/2)^C$ if its allocation is already feasible after comb sampling (and we always obtain a pure strategy). This is because this target shares schedules with $C$ other targets and thus in the worst case may be reduced with $1/2$ probability for each of the $C$ targets. We do a worst case analysis and assume that no resource is allocated to a target when the sampled allocation is infeasible for that target. Thus, let $y_t$ denote the random variable denoting that target $t$ is covered. Thus, $E(y_t) = P(y_t = 1) = P(y_t = 1|E_t) P(E_t) + P(y_t = 1|\lnot E_t)P(\lnot E_t)$. Now, $P(y_t = 1|\lnot E_t)$ is same as $x^m_t/2^C$ and we assumed the worst case of $P(y_t = 1|E_t) = 0$. Thus, we have $E(y_t) \geq x^m_t/2^Ck$. As the utilities are linear in $y_t$, we have the utility for $t$ as $U_t \geq U_t^m/2^Ck$, where $U_t^m$ is the utility under the marginal $\mathbf{x}^m$. Thus, if $t^*$ is the choice of adversary under the marginal $\mathbf{x}^m$ we know that $U_{t^*}^m$ is the lowest utility for the defender over all targets $t$. Hence, we can conclude that the utility with the approximation is at least $U_{t^*}^m/2^Ck$
\end{proof}

\textbf{Proof of Theorem~\ref{TSGApp}}:

\begin{proof}
The main assumption in the proof is that the steps after after comb sampling changes the probability of detecting an adversary in passenger category $j$ by at most $1/c$. Also, by assumption of the theorem since Algorithm~\ref{abstractalgorithm} does not fail ever, the change in  utility for any passenger category $j$ is at most a factor of $1/c$. By similar reasoning as for FAMS, we conclude that this provides a $c$-approximation.
\end{proof}

%\end{minipage}

\end{document}